\newtheorem{proposition}{Proposition}
\theoremstyle{remark}
\newtheorem{remark}{Remark}
\newcommand{\remarkend}{ \IEEEQEDopen}
\newtheorem{lemma}{Lemma}
\begin{document}

\title{Millimeter Wave Beam Alignment: Large Deviations Analysis and Design Insights}
\author{Chunshan Liu,~\IEEEmembership{Member,~IEEE}, Min Li,~\IEEEmembership{Member,~IEEE}, Stephen V. Hanly,~\IEEEmembership{Fellow,~IEEE},\\
Iain B. Collings,~\IEEEmembership{Fellow,~IEEE}, and Philip Whiting
\thanks{This research was supported in part by the Australian Research Council
under grant DP130101760, by the Macquarie University under the MQVCIF
Fellowship, and by the CSIRO Macquarie University Chair in Wireless
Communications. This Chair has been established with funding provided by
the Science and Industry Endowment Fund.
\par Chunshan Liu, Stephen V. Hanly, Iain B. Collings and Philip Whiting are with the Department of Engineering, Macquarie University, Sydney, NSW 2109, Australia (email: \{chunshan.liu, stephen.hanly, iain.collings, philip.whiting\}@mq.edu.au). Min Li is with the School of Electrical Engineering and Computing, The University of Newcastle, Callaghan, NSW 2308, Australia (email: min.li@newcastle.edu.au). (\textit{Corresponding author: Min Li}.)}}

\maketitle

\begin{abstract}
In millimeter wave cellular communication, fast and reliable beam alignment via beam training is crucial to harvest sufficient beamforming gain for the subsequent data transmission. In this paper, we establish fundamental limits in beam-alignment performance under both the exhaustive search and the hierarchical search that adopts multi-resolution beamforming codebooks, accounting for time-domain training overhead. Specifically, we derive lower and upper bounds on the probability of misalignment for an arbitrary level in the hierarchical search, based on a single-path channel model. Using the method of large deviations, we characterize the decay rate functions of both bounds and show that the bounds coincide as the training sequence length goes large. We go on to characterize the asymptotic misalignment probability of both the hierarchical and exhaustive search, and show that the latter asymptotically outperforms the former, subject to the same training overhead and codebook resolution. We show via numerical results that this relative performance behavior holds in the non-asymptotic regime. Moreover, the exhaustive search is shown to achieve significantly higher worst-case spectrum efficiency than the hierarchical search, when the pre-beamforming signal-to-noise ratio (SNR) is relatively low. This study hence implies that the exhaustive search is more effective for users situated further from base stations, as they tend to have low SNR.
\end{abstract}

\begin{IEEEkeywords}
Beamforming, beam alignment, beam training, hierarchical search, millimeter wave communications.
\end{IEEEkeywords}

\section{Introduction}\label{sec:introduction}
\par Millimeter wave (mmWave) communication is one of the important means to expand the system capacity for the fifth generation (5G) cellular networks, thanks to the abundant frequency bands in the range of $30$-$300$ GHz~\cite{boccardi2014five,andrews2014will,7010531}. Despite its great potential~\cite{rappaport2013millimeter,6834753}, many obstacles have to be overcome before mmWave cellular can be realized and deployed in practice. One of the key challenges is that mmWave induces much larger free-space loss due to its higher carrier frequency. Combatting the severe loss necessitates the use of large scale antenna arrays and beamforming at transceivers to accommodate directional transmission in mmWave systems~\cite{roh2014millimeter,alkhateeb2014mimo,heath2016overview}.

\par To achieve large beamforming gain, transmit and receive beams {at base station (BS) and user equipment (UE)} must be adaptively steered and aligned. Assuming perfect channel knowledge, recent studies~\cite{el2014spatially,alkhateeb2014mimo,heath2016overview,sohrabi2016hybrid} have developed a number of optimized hybrid analog and digital BS/UE beamforming solutions, subject to different hardware resource constraints. However, accurate estimation of the channel (i.e., all entries of the channel matrix) itself is a challenging task in mmWave communications, considering the large scale antenna arrays employed.

\par Another viable approach for beam alignment at mmWave is beam training, {which was first considered in the design of mmWave multi-Gigabit wireless local area network (WLAN)~\cite{cordeiro2010ieee,tsang2011coding} and wireless personal area network (WPAN)~\cite{wang2009beam} and has received significant attention in the design of mmWave cellular communications~\cite{hur2013millimeter,alkhateeb2014channel,Xiao2016,Gao2016,Kokshoorn2016Globecom,Kokshoorn2016}}. In this approach, BS and UE jointly examine BS/UE beamforming pairs from pre-designed codebooks that represent the beam search space to find strong multi-path components. This approach does not require explicit estimation of the channel coefficients, and is particularly useful to identify the dominant multi-path component in the sparse mmWave channels and align the transmission and reception along this path.

\par This paper focuses on the beam training approach and establishes fundamental limits in beam-alignment performance. In the literature, two different beam-training strategies have been considered. The conventional approach is for BS and UE to perform an exhaustive search by examining all beam pairs in the codebook and determine the best pair that maximizes a given performance metric (e.g., beamforming gain), which will be used for data transmission. The training overhead of this strategy is proportional to the size of the beam search space and thus can be prohibitive when narrow beams are employed with the aim of achieving high beamforming gain.

\par To reduce the search space (hence the training overhead), references~\cite{wang2009beam,hur2013millimeter,alkhateeb2014channel,Xiao2016} proposed a hierarchical beam search based on multi-level codebook designs (see Fig.~\ref{fig:system}(b) for an example). In such designs, a lower-level codebook consists of wider beams that jointly cover the intended angular interval. A higher-level codebook consists of narrower beams covering the same angular interval, and a subset of them cover the interval of a wider beam in the lower-level. The hierarchical beam search, in the spirit of bisection search, first finds the best wide-beam pair in a lower-level codebook, and then iteratively refines the search using the next-level codebook within the beam subspace prescribed by the wide-beam pair found.

\par {Practical codebooks for the hierarchical search can be realized with multiple RF chains~\cite{alkhateeb2014channel} or a single RF chain~\cite{Xiao2016}. Whereas having multiple RF chains might provide more flexibility in synthesizing the desired beams, in~\cite{Xiao2016}, it was shown that a cost-effective beam synthesis technique using a single RF chain was capable of producing flatter beams with better performance than alternatives realized with more RF chains.} With synthesized beams, it has been shown that hierarchical search can approach the performance of exhaustive search but with fewer training iterations~\cite{alkhateeb2014channel,Xiao2016}.

\par However, a common premise underlying these works is that each training stage embraces relatively high signal-to-noise ratio (SNR) so that each stage tends to achieve the same beam-alignment performance. As indicated by~\cite{alkhateeb2014channel,Xiao2016}, this might be possible if there is sufficiently high transmit power or a sufficiently long training sequence is employed at each stage. But in general, given finite training resources (power and time), an early stage with weak beamforming gains is likely to experience relatively low SNR, even for typical small cell coverage range on the order of $100$ meters (see the measurement results reported in~\cite{6834753}). There may thus be a higher chance of failing to find the best beam pair at an early stage, leading to subsequent misalignment at higher levels.

\par {To alleviate this potential issue in the hierarchial search, some remedies were proposed in~\cite{alkhateeb2014channel} and in more recent studies~\cite{Kokshoorn2016Globecom,Kokshoorn2016}. In particular, reference~\cite{alkhateeb2014channel} proposed to allocate power inversely proportional to beamforming gains at different stages with the aim of equalizing the performance at each stage. Whilst this is a valid approach in theory, it does induce a large peak-to-average power ratio, which might not be tolerable in practical system design. References~\cite{Kokshoorn2016Globecom,Kokshoorn2016} proposed an adaptive beam-alignment and channel estimation approach that allows re-training (i.e., additional measurements) to be taken within the most likely subrange pair identified through initial training at each stage until a {certain probability threshold has been met or the maximum number of measurements has been reached}. This approach hence would involve dynamic allocation of the number of measurements between different stages and would require additional one or multiple rounds of feedback for each stage, compared to the hierarchical search of~\cite{alkhateeb2014channel}. Performance upper bounds were derived in these works, but no analytical comparison was established between the hierarchical search and the exhaustive search, subject to the same finite amount of training resource.}

\par In this paper, we establish fundamental limits and comparison on the beam-alignment performance under both the exhaustive search and the hierarchical search, accounting for the same amount of training power and pilot symbols. {In order to minimize the transmit power dynamic and overcome the poor efficiency of power-amplifiers at mmWave bands~\cite{huang2011millimeter,yong201160ghz}, we assume each training symbol is transmitted at constant peak power. We also assume that both BS and UE employ a single RF chain and thus the beam codewords for training purpose are realized using superior analog beamforming proposed by~\cite{Xiao2016}.}

\par We note that if the SNR is sufficiently high where a short pilot sequence is enough to ensure satisfactory low misalignment probability, the hierarchical search can be effective and has the advantage to reduce the search space. However, for lower SNRs, both the exhaustive search and the hierarchical search require longer pilot sequence to ensure a low probability of misalignment. In such circumstances, given a fixed and finite number of training pilot symbols, the dilemma is that the hierarchical search allocates more pilot symbols to each beam pair, due to the smaller number of beam pairs to be examined, but it starts with wide beams with relatively low beam forming gain in the first stage, which may fail and become the bottleneck on the overall beam-alignment performance; on the other hand, the exhaustive search needs to examine more beam pairs, but it goes directly to narrow beams with relatively high beam forming gain, which may offset the disadvantage in training resource per beam pair and achieve a better beam-alignment performance. It is thus unclear how the training resource considered affects the performance of these two search strategies and how they compare with each other subject to the same amount of training resource.

\par The main contributions of the paper are summarized as follows:
\begin{itemize}
  \item  We derive both lower and upper bounds on the probability of misalignment at an arbitrary level in the hierarchical search, focusing on a single-path channel model. The bounds are characterized using the training sequence length and other relevant system parameters. Using the method of large deviations~\cite{dembo2009large}, we further characterize the decay rate functions of these bounds and show that the upper and lower bounds coincide as the training sequence length goes large. As an important byproduct of this analysis, we identify that a desirable codeword for beam training purposes should have uniform gain in its intended coverage interval and zero leakage outside the interval (which we refer to as an ideal beam pattern).
  \item Building on the results above, we characterize the asymptotic misalignment probability of both hierarchical and exhaustive search. Under the same targeted beam search resolution, we show that hierarchical search is outperformed by exhaustive search as the training sequence length goes large, provided that the same training overhead is incurred. Moreover, numerical results demonstrate that this relative performance behavior is valid in the non-asymptotic regime when the pre-beamforming SNR is relatively low.
  \item We extend the study to more general setups involving multi-path channels and imperfect beamforming patterns. It is numerically shown that with practical synthesized training codebooks, the exhaustive search still significantly outperforms the hierarchical search in terms of either misalignment probability or achievable spectral efficiency (for data transmission after beam training) when the pre-beamforming SNR is relatively low, under the same amount of training overhead. The performance difference diminishes in the high SNR regime, as both schemes can provide high reliability of beam-alignment.
\end{itemize}
\par \textit{Notation}: Boldface uppercase letters and boldface lowercase letters are used to denote matrices
and vectors, respectively, e.g., ${\bf A}$ is a matrix and ${\bf a}$ is a vector. Notations $(\cdot)^T$ and $(\cdot)^\dag$ denote
transpose and conjugate transpose, respectively. Notation $\|a\|_2$ stands for the $l_2$ norm of vector ${\bf a}$, and notation ${\mathbb E}(\cdot)$ denotes the expectation operation. ${\cal CN}(0,\sigma^2)$ denotes a zero-mean circularly-symmetric complex Gaussian distribution with variance $\sigma^2$. Finally, $\left|{\cal A}\right|$ denotes the cardinality of set ${\cal A}$ and ${\mathbb C}$ is the set of all complex numbers.

\section{System Description and Problem Formulation}\label{sec:model}
We consider a mmWave downlink communication, in which the BS, equipped with $N_T$ antennas, wishes to communicate with a UE that is equipped with $N_R$ antennas, as depicted in Fig.~\ref{fig:system}(a). Both BS and UE are provided only a single RF chain, and thus analog BS/UE beamforming is adopted, similar to~\cite{wang2009beam,Xiao2016}. {Depending on the beam synthesis techniques, each antenna at BS/UE is allowed to be turned off (or deactivated) if needed.} In order to align BS/UE beams, a hierarchical training approach that employs multi-resolution beam codebooks is considered. In the beam alignment process, we assume that the BS and the UE are synchronized (we refer to \cite{barati2014dreictional,shokri2015millimeter,LiummWave2016} for possible mmWave synchronization/initial access solutions), and the UE can feedback the transmit beamforming indices to the BS when necessary.

In the following,  we first define the beam training codebooks and then formalize the beam search operations and relevant signal models. {Table~\ref{table:notations} lists the key notations related to the model description and problem formulation below.}

\begin{figure}[t]
\centering
\includegraphics[width=0.45\textwidth]{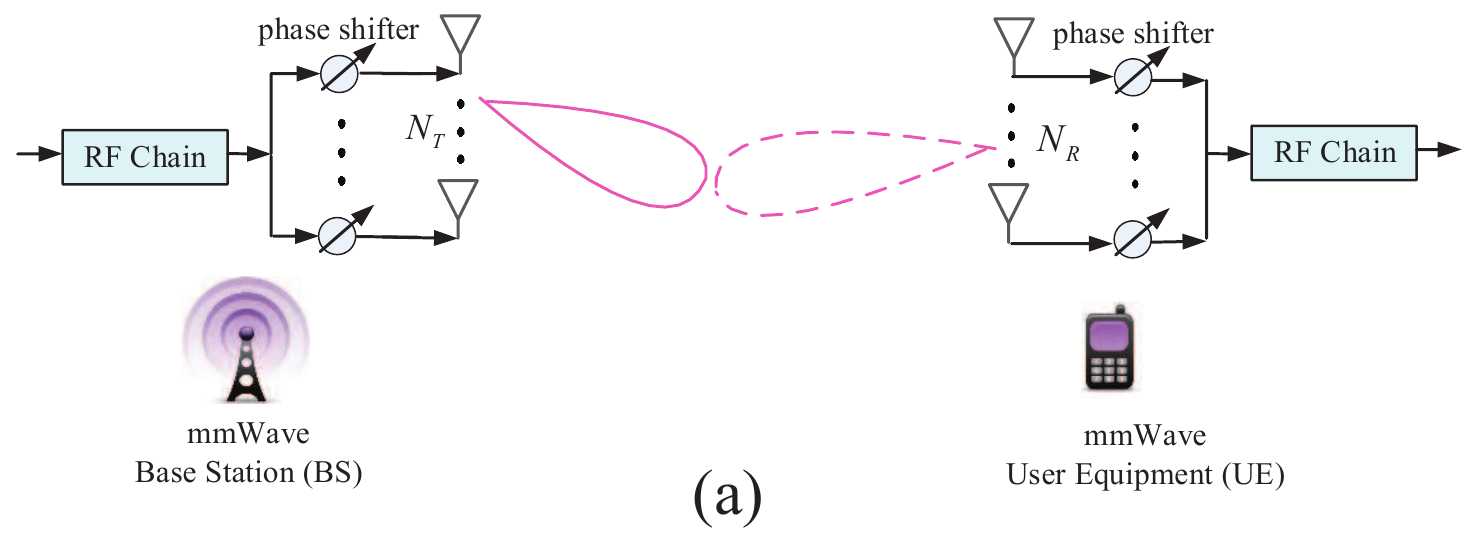}
\includegraphics[width=0.4\textwidth]{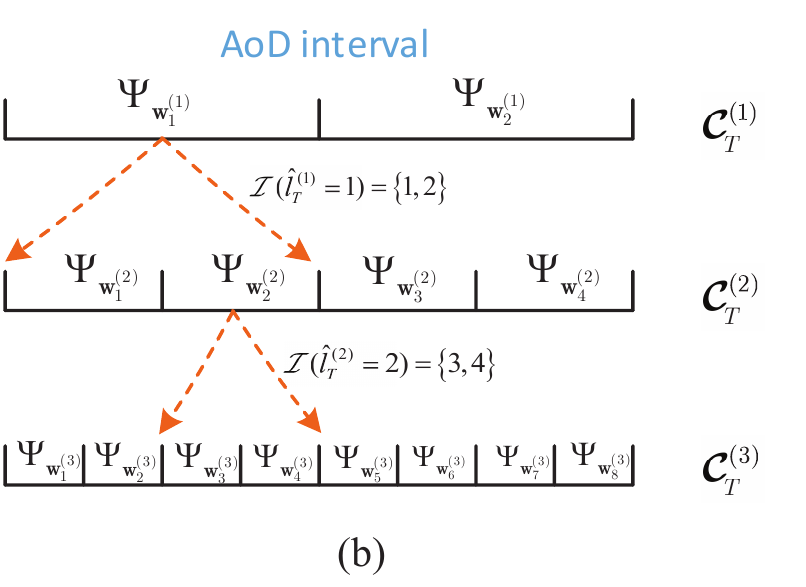}
\caption{An illustration of (a): the mmWave communication system considered, and (b): the hierarchical codebooks at BS with $K=3$ and $L_T^{(k)} = 2^k$ ($k=1,2,3$).}
\label{fig:system}
\end{figure}

\par Let $\Psi$ and $\Phi$ be the entire {Angle of Departure (AoD) and Angle of Arrival (AoA)} search interval, respectively. Note that $\Psi$ and $\Phi$ could cover the entire angular space or only a portion (e.g., a $60^\circ$ sector between $[-30^\circ, +30^\circ]$).  Let $\mathbf{w}_{l_T}^{(k)} \in \mathbb{C}^{1\times N_T}$ and $\mathbf{f}_{l_R}^{(k)} \in \mathbb{C}^{1\times N_R}$ denote an arbitrary BS beamformer and UE combiner at the $k$th level, with coverage intervals
$\Psi_{\mathbf{w}_{l_T}^{(k)}}$ and $\Phi_{\mathbf{f}_{l_R}^{(k)}}$, respectively. In what follows, we focus on elaborating the codebook structure at the BS, as the UE codebooks can be constructed in a similar manner.

\begin{table}[t]
  \centering
  \caption{{A List of Key Notations Related to the Model Description and Problem Formulation}}
    \begin{tabular}{l l}
    \hline
    $_l$      &index of a transmit/receive beam pair \\
    $^{(k)}$  &index of a search level in the hierarchical search\\
    $L_T^{(k)}$ &transmit codebook size at level $k$ \\
    $L_R^{(k)}$ &receive codebook size at level $k$ \\
    $L^{(k)}$ &number of beam pairs examined at level $k$ \\
    $L$ & number of beam pairs examined in $K$-level search \\
    $L_{ex}$ &number of beam pairs examined in exhaustive search \\
    $l_{opt}^{(k)}$ &index of the optimal beam pair at level $k$ \\
    $\hat{l}_{opt}^{(k)}$ &index of the estimated best beam pair at level $k$ \\
    \hline
    \end{tabular}%
  \label{table:notations}%
\end{table}

\par Let ${\cal C}_{T}^{(k)} = \{{\mathbf{w}_{l_T}^{(k)}}, l_T =1,\cdots, L_T^{(k)}\}$ denote the entire codebook of level $k$ at BS, whose size equals $L_T^{(k)}$. A $K$-level hierarchical search is performed based on a sequence of codebooks ${\cal C}_{T}^{(1)},\cdots,{\cal C}_{T}^{(K)}$ that are arranged in a nested structure~\cite{Xiao2016,alkhateeb2014channel}. An example of such a codebook design with $K=3$ and $L_T^{(k)} = 2^k$ ($k=1,2,3$) is depicted  in  Fig.~\ref{fig:system} (b), where the coverage intervals of the beamformers are illustrated by line segments. In general, the codewords at each level~$k$ jointly cover the entire search interval and the number of codewords tends to increase as $k$ grows with the aim of providing higher spatial resolution. We thus have $\bigcup_{l_T=1}^{L_T^{(k)}}\Psi_{\mathbf{w}^{(k)}_{l_T}} = \Psi,~k=1,\cdots,K$, and $L_T^{(1)} \le L_T^{(2)} \le \cdots \le L_T^{(K)}$. Moreover, the codebooks are nested in the sense that the coverage of an arbitrary codeword ${\mathbf{w}^{(k)}_{{\hat l}_T}}$ at lower-level $k$ is covered by the union of a number of codewords at the higher-level $k+1$. Denote the collection of these codeword indices at level $k+1$ as $\mathcal{I}({\hat l}_T^{(k)})$. We then have that $\bigcup_{l_T \in \mathcal{I}({\hat l}_T^{(k)})}\Psi_{\mathbf{w}^{(k+1)}_{l_T}} = \Psi_{\mathbf{w}^{(k)}_{{\hat l}_T}},~k=1,\cdots,K$. It is further assumed that $\mathcal{I}({\hat l}_T^{(k)}) \cap \mathcal{I}({\tilde l}_T^{(k)}) = \emptyset$, for any ${\hat l}_T^{(k)} \ne {\tilde l}_T^{(k)}$.

\par With a series of such codebooks at BS and UE, the hierarchical beam search starts with the first level codebook, examines all possible combinations of beamformer/combiner, and determines the best pair that produces the maximum output signal. Assume that this best pair is indexed by $({\hat l}_T^{(1)},{\hat l}_R^{(1)})$. The UE then feedbacks beamformer index ${\hat l}_T^{(1)}$ to the BS. In the next stage, the search will only probe into a subset of beamformers/combiners in the codebooks (i.e., $\{{\mathbf{w}^{(2)}_{l_T}} \times {\mathbf{f}^{(2)}_{l_R}}: l_T \in \mathcal{I}({\hat l}_T^{(1)}), l_R \in \mathcal{I}({\hat l}_R^{(1)})\}$), and determine the best pair $({\hat l}_T^{(2)},{\hat l}_R^{(2)})$ that produces the maximum output signal, and then the UE feedbacks index ${\hat l}_T^{(2)}$ to the BS. Given $({\hat l}_T^{(2)},{\hat l}_R^{(2)})$, the search turns to evaluate a corresponding subset of beamformers/combiners in the next-level codebooks. This search continues recursively until it reaches the last level of the codebooks.

\par Let $L^{(k)} = \left|\mathcal{I}({\hat l}_T^{(k-1)})\right|\left|\mathcal{I}({\hat l}_R^{(k-1)})\right|$ be the total number of beam pairs examined at the $k$th level (with $L^{(1)}=L_T^{(1)}L_R^{(1)}$). The training overhead of a $K$-level hierarchical search thus amounts to $L = \sum_{k=1}^{K}L^{(k)}$. For comparison, consider an exhaustive search that targets the same beam resolution of the hierarchical search. The training overhead would be $L_{ex} = L_T^{(K)}L_R^{(K)}$, which can be significantly larger than $L$, as $L_{ex} \gg L^{(k)}$, $\forall k=1,\ldots,K$. This search space reduction is the potential advantage of the hierarchical search.

\par In order to characterize the performance of the hierarchical search, we now formalize signal models relevant to the beam training and search operations. We adopt a frequency-flat and block fading channel model, and assume that the channel does not change during the beam-alignment process. For ease of exposition, we relabel the $L^{(k)}$ pairs of beamformer/combiner $\{{\mathbf{w}^{(k)}_{l_T}} \times {\mathbf{f}^{(k)}_{l_R}}: l_T \in \mathcal{I}({\hat l}_T^{(k-1)}), l_R \in \mathcal{I}({\hat l}_R^{(k-1)})\}$ to be evaluated at level $k$ as $\{{\mathbf{w}^{(k)}_{l}} \times {\mathbf{f}^{(k)}_{l}}: l = 1, \cdots,L^{(k)}\}$.
\par Let $\mathbf{s} \in \mathbb{C}^{1\times N}$ be a training pilot sequence with $N$ symbols {allocated to each beam pair. Then the total number of pilot symbols, $N_{tot}$, amounts to the product of the number of beam pairs examined in the search process and the pilot sequence length $N$ for each beam pair.} Considering any arbitrary level $k$ in the hierarchical search, the received signal associated with the $l$th beamformer/combiner pair can be represented as:
\begin{align}
\mathbf{y}^{(k)}_{l} &= \mathbf{f}^{(k)}_{l}\mathbf{H}\mathbf{w}^{{(k)}\dagger}_{l}\mathbf{s} + \mathbf{f}^{(k)}_{l}\mathbf{Z}^{(k)}_{l} \nonumber\\
&= h^{(k)}_l\mathbf{s} + \mathbf{z}^{(k)}_{l}\in \mathbb{C}^{1\times N},~~~l = 1,\ldots,L^{(k)}.
\end{align}
where $\mathbf{H}\in \mathbb{C}^{N_R\times N_T}$ is the mmWave channel matrix, $\mathbf{Z}^{(k)}_{l} \in \mathbb{C}^{N_R \times N}$ is the noise matrix (before receiver combining) with i.i.d. components $\sim {\cal CN}(0,\sigma^2)$, and we have defined
\begin{align}
h^{(k)}_{l} \triangleq \mathbf{f}^{(k)}_{l}\mathbf{H}\mathbf{w}^{(k)\dagger}_{l}, \label{equ:effective:channel}
\end{align}
as the effective channel after any fixed transmit beamforming and receive combining. Both BS beamformer and UE combiner are {realized using a single RF chain, and their entries are either of constant modulus or zero (i.e., allowing deactivation of some antennas to form wide beams). Moreover, both beamformer and combiner are assumed to be of unit-norm, namely, $\|\mathbf{w}^{(k)}_{l}\|_2^2 = \|\mathbf{f}^{(k)}_{l}\|_2^2=1$.} Therefore, $\mathbf{z}^{(k)}_{l} \in \mathbb{C}^{1\times N}$ with i.i.d. components $\sim {\cal CN}(0,\sigma^2)$. We further assume that the pilot sequence is transmitted at {constant} peak power $P_T$, {which aims to minimize the transmit power dynamic and overcome the poor power-efficiency of power amplifiers at mmWave frequencies~\cite{huang2011millimeter,yong201160ghz}}. Thus $\|\mathbf{s}\|_2^2=NP_T$.

\par The received signal is then match-filtered with training sequence $\mathbf{s}$, and the best beamformer/combiner pair at arbitrary level $k$ is selected as the one that gives rise to the strongest match-filtering output~\cite{Xiao2016}:
\begin{equation}\label{Eq:est_1}
\hat{l}_{opt}^{(k)} = \arg\max_{l=1,\ldots,L^{(k)}}\left|\mathbf{y}^{(k)}_l\mathbf{s}^{\dagger}\right|.
\end{equation}
\par In the absence of noise, it should be highly reliable to identify the best BS and UE beam pair that is aligned with the dominant path at an arbitrary codebook level. However, the output signal can be swayed by the noise if it is present. In the next section, we shall analyze the property of \eqref{Eq:est_1} and characterize the beam search performance at an arbitrary level~$k$ as well as the overall beam-alignment performance.

\par {Before we delve into the analysis, we remark that the match-filtering output~\eqref{Eq:est_1} essentially captures the energy received by UE, which depends on the pilot sequence length, the transmit power, as well as the effective channel gain. A stronger match-filtering output is thus expected by using a larger transmit power, or a longer pilot sequence length, or both. The analysis in this work focuses on the impact of pilot sequence length that tends to be large. This is motivated by the fact that the coherence bandwidth of mmWave band can be on the order of several~ten~MHz or more~\cite{rappaport2015wideband}, and thus the number of symbols accommodated within a coherent time interval can be very large. The asymptotic analysis in the current work is hence closely relevant to this practical scenario and is useful to provide guideline on practical system designs. Similar analysis in theory continues to apply when the transmit power goes large, but this would provide little practical implication, as peak-power constraint is a critical factor in mmWave communications as discussed earlier in Section~\ref{sec:introduction}.}

\section{Performance Analysis Under Rank-One (Single-Path) Channel Model}
In this section, we present fundamental limits on the performance of mmWave beam-alignment and compare the hierarchical search with the exhaustive search. Throughout the analysis, we consider a deterministic channel and focus on the rank-one channel model (when the channel has only single-path) for tractability. Later, we shall show by simulations that the key insights generated from the analysis with single-path channels continue to apply to scenarios in which {there are multi-paths with a dominant path, i.e., typical mmWave Rician channels}, see Section~\ref{sec:simulation}.

\par For the case with a single-path between BS and UE, the channel matrix $\mathbf{H}$ can be represented as:
\begin{equation}\label{eq:ch_singlepath}
\mathbf{H} = \alpha \mathbf{u}^{\dagger}(\phi)\mathbf{v}(\psi),
\end{equation}
where $|\alpha|^2$ is the path gain and $\mathbf{u}(\phi)\in \mathbb{C}^{1\times N_R}$ and $\mathbf{v}(\psi)\in \mathbb{C}^{1\times N_T}$ are the steering vectors corresponding to AoA $\phi$ and AoD $\psi$, respectively.
For instance, if a uniform linear array with half wave-length antenna spacing is equipped at the BS, the steering vector $\mathbf{v}(\psi)$ can be represented as:
\begin{equation}
\mathbf{v}(\psi) = [1,e^{j\pi\sin(\psi)},\ldots,e^{j\pi(N_T-1)\sin(\psi)}].
\end{equation}
Therefore, from~\eqref{equ:effective:channel} and~\eqref{eq:ch_singlepath}, the effective channel gain under any fixed transmit and receive beamforming can be represented as:
\begin{align}
g^{(k)}_{l}\triangleq |h^{(k)}_l|^2 &=|\alpha (\mathbf{f}^{(k)}_{l}\mathbf{u}^{\dagger}(\phi))(\mathbf{v}(\psi)\mathbf{w}_{l}^{(k)\dagger})|^2\\
&= |\alpha|^2W^{(k)}_{l}(\psi)F^{(k)}_{l}(\phi)\\
&= |\alpha|^2G_l^{(k)}(\psi,\phi),
\end{align}
where $W^{(k)}_{l}(\psi) \triangleq |\mathbf{v}(\psi)\mathbf{w}^{(k)\dagger}_{l}|^2$ is the transmit beamforming gain of $\mathbf{w}^{(k)}_l$ at AoD $\psi$, $F^{(k)}_{l}(\phi) \triangleq |\mathbf{u}(\phi)\mathbf{f}^{(k)\dagger}_{l}|^2$ is the receive beamforming gain of $\mathbf{f}^{(k)}_{l}$ at AoA $\phi$, and $G_l^{(k)}(\psi,\phi) \triangleq W^{(k)}_{l}(\psi)F^{(k)}_{l}(\phi)$ is the corresponding combined transmit and receive beamforming gain.

By the properties of the hierarchical codebooks, we have
\begin{equation}
g^{(k)}_{l}>g^{(k)}_{j}, ~\forall j\neq l,~\text{if}~\psi\in\Psi_{\mathbf{w}^{(k)}_{l}},\phi\in\Phi_{\mathbf{f}^{(k)}_{l}}.
\end{equation}
Namely, the highest possible effective channel gain at level $k$ is achieved by beamformer $\mathbf{w}^{(k)}_{l}$ and combiner $\mathbf{f}^{(k)}_{l}$ if the AoD ($\psi$) and AoA ($\phi$) of the propagation path is within the coverage of beamformer $\mathbf{w}^{(k)}_{l}$ and combiner $\mathbf{f}^{(k)}_{l}$.

\par Let $l^{(k)}_{opt} \triangleq \arg\max_{l=1,\ldots,L^{(k)}}g_{l}^{(k)}$ be the index of the optimal beam pair at level $k$, which achieves the highest effective channel gain at this level. It is then clear that a misalignment event occurs whenever the estimated best beam-pair index $\hat{l}_{opt}^{(k)} \ne l^{(k)}_{opt}$, for any $k=1,\cdots, K$.

\subsection{Probability of Misalignment at the $k$th Level}\label{sub_sec_single}
In this subsection, we focus on an arbitrary level $k$. To facilitate analysis, we introduce a normalized version of statistic~\eqref{Eq:est_1} that is defined as
\begin{equation}\label{Eq:est_2}
T^{(k)}_l\triangleq \frac{2\left|\mathbf{y}^{(k)}_l\mathbf{s}^{\dagger}\right|^2}{\sigma^2 \|\mathbf{s}\|_2^2}.
\end{equation}
It is straightforward to see that
\begin{align}
\hat{l}_{opt}^{(k)} = \arg\max_{l=1,\ldots,L^{(k)}}\left|\mathbf{y}^{(k)}_l\mathbf{s}^{\dagger}\right| = \arg\max_{l=1,\ldots,L^{(k)}} T^{(k)}_l.
\end{align}

\par The probability of misalignment at level $k$ can then be represented as
\begin{align}\label{Eq:miss_align_1}
p^{(k)}_{miss} &= Pr\{\hat{l}_{opt}^{(k)}\neq l^{(k)}_{opt}\} =  Pr\left\{T_{l^{(k)}_{opt}}^{(k)} < T_{\hat{l}_{opt}^{(k)}}^{(k)}\right\}\nonumber\\
& = Pr\left\{\bigcup_{l=1,l\neq l^{(k)}_{opt}}^{L^{(k)}} T_{l^{(k)}_{opt}}^{(k)}< T_{l}^{(k)}\right\}.
\end{align}
Using the union bound, it can be seen that $p^{(k)}_{miss}$ is upper bounded by:
\begin{equation}\label{Upper_bound}
p^{(k)}_{miss} \leq \sum_{l=1,l\neq l^{(k)}_{opt}}^{L^{(k)}} Pr\left\{T_{l^{(k)}_{opt}}^{(k)}< T_{l}^{(k)}\right\}\triangleq p^{(k)}_{up}.
\end{equation}
\par It is noted that $T^{(k)}_l$'s are independent as each of them is measured at a different time under different transmit and receive beamforming combinations. Moreover, each $T^{(k)}_l$ admits a non-central chi-square distribution with $2$ degrees of freedom~(DoFs) and a non-centrality parameter
\begin{align}
\lambda^{(k)}_{l} = \frac{2|h^{(k)}_{l}|^2\|\mathbf{s}\|^2_2}{\sigma^2}=\frac{2NP_Tg_{l}^{(k)}}{\sigma^2},
\end{align}
i.e., we have
\begin{equation}
T^{(k)}_{l} = \frac{2\left|\mathbf{y}^{(k)}_{l}\mathbf{s}^{\dagger}\right|^2}{\sigma^2\left\|\mathbf{s}\right\|_2^2} \sim \chi^2_2(\lambda^{(k)}_{l}).
\end{equation}
These facts lead to the conclusion that the ratio of $T_{l^{(k)}_{opt}}^{(k)}$ to ${T_{l}^{(k)}}$ (i.e., {$T_{l^{(k)}_{opt}}^{(k)}/{T_{l}^{(k)}}$)} admits a doubly non-central \emph{F}-distribution (denoted by $F(n_1,n_2,\eta_1,\eta_2)$), with DoFs $n_1 =n_2 =2$ and non-centrality parameters $\eta_1 = \lambda^{(k)}_{l^{(k)}_{opt}}$ and $\eta_2 = \lambda^{(k)}_{l}$. This allows us to represent the pair-wise probability $Pr\left\{T_{l^{(k)}_{opt}}^{(k)}< T_{l}^{(k)}\right\}$ using the cumulative distribution function (CDF) of the doubly non-central \emph{F}-distribution:
\begin{equation}\label{Piece_wise_up}
Pr\left\{T_{l^{(k)}_{opt}}^{(k)}< T_{l}^{(k)}\right\} = F_{(2,2)}\left(1|\lambda_{l^{(k)}_{opt}}^{(k)},\lambda_{l}^{(k)}\right),
\end{equation}
where $F_{(n_1,n_2)}(x|\eta_1,\eta_2)$ is the CDF of $F(n_1,n_2,\eta_1,\eta_2)$. The upper bound $p^{(k)}_{up}$ in \eqref{Upper_bound} can then be represented as:
\begin{equation}\label{upper_bound_f}
p^{(k)}_{up} = \sum_{l=1,l\neq l^{(k)}_{opt}}^{L^{(k)}} F_{(2,2)}\left(1|\lambda_{l^{(k)}_{opt}}^{(k)},\lambda_{l}^{(k)}\right).
\end{equation}
\par On the other hand, applying the Bonferroni inequalities~\cite{galambos1977bonferroni} to \eqref{Eq:miss_align_1}, it can be shown that $p^{(k)}_{miss}$ is lower bounded as:
\begin{align}
p^{(k)}_{miss} & \geq p^{(k)}_{up} - \sum_{\scriptstyle 1 \leq i<j \leq L^{(k)}\atop \scriptstyle i,j\neq l^{(k)}_{opt} }Pr\left\{T_{l^{(k)}_{opt}}^{(k)}< T_{i}^{(k)}, T_{l^{(k)}_{opt}}^{(k)}< T_{j}^{(k)}\right\} \\
& \geq p^{(k)}_{up} - \sum_{\scriptstyle 1\leq i<j \leq L^{(k)}\atop \scriptstyle i,j\neq l^{(k)}_{opt} } Pr\left\{T_{l^{(k)}_{opt}}^{(k)}< \frac{T_{i}^{(k)}+T_{j}^{(k)}}{2}\right\}  \label{lower_bound} \\
&\triangleq p^{(k)}_{low}, \nonumber
\end{align}
where \eqref{lower_bound} holds as $Pr\left\{T_{l^{(k)}_{opt}}^{(k)}< T_{i}^{(k)}, T_{l^{(k)}_{opt}}^{(k)}< T_{j}^{(k)}\right\} \leq Pr\left\{T_{l^{(k)}_{opt}}^{(k)}< \frac{T_{i}^{(k)}+T_{j}^{(k)}}{2}\right\}$. Since $T_{i}^{(k)}$ and $T_{j}^{(k)}$ are independent chi-square random variables, random variable $T_{i}^{(k)}+T_{j}^{(k)}$ also admits a non-central chi-square distribution, with $4$ DoFs and a non-centrality parameter $\lambda^{(k)}_{i}+\lambda^{(k)}_{j}$. Following a similar treatment as we did in \eqref{Piece_wise_up}, we can further represent the lower-bound as:
\begin{equation}\label{lower_bound_f}
p^{(k)}_{low} = p^{(k)}_{up} - \sum_{\scriptstyle 1\leq i<j \leq L^{(k)}\atop \scriptstyle i,j\neq l^{(k)}_{opt}}F_{(2,4)}\left(1|\lambda_{l^{(k)}_{opt}}^{(k)},\lambda_{i}^{(k)}+\lambda_{j}^{(k)}\right).
\end{equation}

The upper bound in \eqref{upper_bound_f} and the lower bound in \eqref{lower_bound_f} are characterized using the pair-wise probabilities $F_{(2,2)}\left(1|\lambda_{l^{(k)}_{opt}}^{(k)},\lambda_{l}^{(k)}\right)$ and $F_{(2,4)}\left(1|\lambda_{l^{(k)}_{opt}}^{(k)},\lambda_{i}^{(k)}+\lambda_{j}^{(k)}\right)$, which depend on the length of the pilot sequence ($N$). By the large deviation principle (LDP)~\cite{dembo2009large}, it can be shown that $F_{(2,2)}\left(1|\lambda_{l^{(k)}_{opt}}^{(k)},\lambda_{l}^{(k)}\right)$ and $F_{(2,4)}\left(1|\lambda_{l^{(k)}_{opt}}^{(k)},\lambda_{i}^{(k)}+\lambda_{j}^{(k)}\right) $ both decay exponentially as $N \rightarrow \infty$. This implies that as $N$ increases, the probability of misalignment decreases, which is consistent with intuition. Moreover, it can be shown that as $N \rightarrow \infty$, the upper bound and the lower bound coincide, which provides a useful means to analyze the behavior of $p^{(k)}_{miss}$. We present these results in Lemma~\ref{Lemma_LDP} and Proposition~\ref{Proposition_1}.

\begin{lemma}\label{Lemma_LDP}
Define $\xi^{(k)}_{l}\triangleq \frac{\lambda^{(k)}_{l}}{N} = \frac{2P_Tg^{(k)}_{l}}{\sigma^2}$. Then the pair-wise probabilities $F_{(2,2)}\left(1|\lambda_{l^{(k)}_{opt}}^{(k)},\lambda_{l}^{(k)}\right)$ and $F_{(2,4)}\left(1|\lambda_{l^{(k)}_{opt}}^{(k)},\lambda_{i}^{(k)}+\lambda_{j}^{(k)}\right) $ satisfy:
\begin{equation}\label{eq:lemma1_1}
\lim_{N \uparrow \infty}\frac{1}{N}\log F_{(2,2)}\left(1|\lambda_{l^{(k)}_{opt}}^{(k)},\lambda_{l}^{(k)}\right) = -I_1\left(\xi_{l^{(k)}_{opt}}^{(k)},\xi_{l}^{(k)}\right),
\end{equation}
and
\begin{align}\label{eq:lemma1_2}
\lim_{N \uparrow \infty}\frac{1}{N}\log &F_{(2,4)}\left(1|\lambda_{l^{(k)}_{opt}}^{(k)},\lambda_{i}^{(k)}+\lambda_{j}^{(k)}\right) \nonumber \\
&~~~~~~~~~~~~~~~=-I_2\left(\xi_{l^{(k)}_{opt}}^{(k)},\xi_{i}^{(k)},\xi_{j}^{(k)}\right),
\end{align}
respectively, where
\begin{equation}\label{rate_1}
I_1(\xi_{l^{(k)}_{opt}}^{(k)},\xi_{l}^{(k)}) = \frac{\left(\sqrt{\xi^{(k)}_{l^{(k)}_{opt}}}-\sqrt{\xi^{(k)}_{l}}\right)^2}{4}
 \end{equation}
 and
 \begin{equation}\label{rate_2}
 I_2\left(\xi_{l^{(k)}_{opt}}^{(k)},\xi_{i}^{(k)},\xi_{j}^{(k)}\right) = \frac{\left(\sqrt{2\xi^{(k)}_{l^{(k)}_{opt}}}-\sqrt{\xi^{(k)}_{i}+\xi^{(k)}_{j}}\right)^2}{6}.
 \end{equation}
\end{lemma}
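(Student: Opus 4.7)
The plan is to recognize both CDF values as tail probabilities for independent non-central chi-squared random variables whose non-centrality parameters scale linearly in $N$, and then to apply the G\"artner-Ellis theorem~\cite{dembo2009large}. The quantitative input is the standard MGF of a non-central chi-squared, $\mathbb{E}[e^{\theta T}] = (1-2\theta)^{-k/2}\exp[\lambda\theta/(1-2\theta)]$ for $T \sim \chi^2_k(\lambda)$ and $\theta < 1/2$; after that the argument is pure calculus.

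For \eqref{eq:lemma1_1}, I would rewrite $F_{(2,2)}(1 | \lambda_{l^{(k)}_{opt}}^{(k)}, \lambda_l^{(k)}) = \Pr(T_l^{(k)} - T_{l^{(k)}_{opt}}^{(k)} > 0)$, with the two chi-squareds independent at $2$ DoFs and non-centralities $N\xi_l^{(k)}$, $N\xi_{l^{(k)}_{opt}}^{(k)}$. Setting $Y_N := (T_l^{(k)} - T_{l^{(k)}_{opt}}^{(k)})/N$, independence and the MGF formula give a closed form for $(1/N)\log \mathbb{E}[\exp(N\theta Y_N)]$ in which the $O(1/N)$ logarithmic terms vanish, leaving the pointwise limit
\[
\Lambda_1(\theta) = \frac{\xi_l^{(k)}\theta}{1-2\theta} - \frac{\xi_{l^{(k)}_{opt}}^{(k)}\theta}{1+2\theta}, \qquad \theta \in (-\tfrac{1}{2},\tfrac{1}{2}).
\]
This $\Lambda_1$ is finite, strictly convex, and essentially smooth on its domain, so G\"artner-Ellis yields the LDP for $Y_N$ at speed $N$ with good rate $\Lambda_1^\star$. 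Because the limiting mean $\xi_l^{(k)} - \xi_{l^{(k)}_{opt}}^{(k)} \le 0$ and $\Lambda_1^\star$ is monotone past its zero, $\lim (1/N)\log \Pr(Y_N > 0) = \inf_\theta \Lambda_1(\theta)$. Solving $\Lambda_1'(\theta)=0$ gives the clean condition $\sqrt{\xi_l^{(k)}}/(1-2\theta) = \sqrt{\xi_{l^{(k)}_{opt}}^{(k)}}/(1+2\theta)$, whose interior root substitutes back to $\Lambda_1(\theta^\star) = -(\sqrt{\xi_{l^{(k)}_{opt}}^{(k)}} - \sqrt{\xi_l^{(k)}})^2/4 = -I_1$ after a brief algebraic simplification.

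The second identity \eqref{eq:lemma1_2} is handled identically: $T_i^{(k)}+T_j^{(k)}$ is independent $\chi^2_4$ with non-centrality $N(\xi_i^{(k)}+\xi_j^{(k)})$, and the event inside $F_{(2,4)}$ is $\Pr(T_i^{(k)}+T_j^{(k)} > 2T_{l^{(k)}_{opt}}^{(k)})$. The limiting CGF is now
\[
\Lambda_2(\theta) = \frac{(\xi_i^{(k)}+\xi_j^{(k)})\theta}{1-2\theta} - \frac{2\xi_{l^{(k)}_{opt}}^{(k)}\theta}{1+4\theta}, \qquad \theta \in (-\tfrac{1}{4},\tfrac{1}{2}),
\]
and the first-order condition $(1+4\theta)/(1-2\theta) = \sqrt{2\xi_{l^{(k)}_{opt}}^{(k)}/(\xi_i^{(k)}+\xi_j^{(k)})}$ yields a closed-form $\theta^\star$ whose substitution, by the same difference-of-square-roots identity, collapses $\Lambda_2(\theta^\star)$ to $-(\sqrt{2\xi_{l^{(k)}_{opt}}^{(k)}} - \sqrt{\xi_i^{(k)}+\xi_j^{(k)}})^2/6 = -I_2$; the factor $6$ in the denominator is traceable to the coefficients $2$ and $4$ appearing in $(1-2\theta)$ and $(1+4\theta)$.

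The main obstacle in each case is not the calculus but checking the G\"artner-Ellis hypotheses, so that the Legendre transform of the pointwise CGF limit is genuinely the LDP rate. Finiteness, strict convexity, and essential smoothness are immediate from the rational structure of $\Lambda_i$ on the stated open intervals (each blows up at the boundary), and the optimizer lies in the interior whenever the rate is strictly positive; the only degenerate cases $\xi_l^{(k)}=\xi_{l^{(k)}_{opt}}^{(k)}$ and $\xi_i^{(k)}+\xi_j^{(k)}=2\xi_{l^{(k)}_{opt}}^{(k)}$ give rate $0$, consistent with the formulae, and need no separate treatment.
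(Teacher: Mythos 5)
Your proposal is correct, and it reaches the stated rate functions by a genuinely different route from the paper's. The paper applies the G\"artner--Ellis theorem to the two-dimensional vector $\left(T^{(k)}_{l_{opt}}/N,\,T^{(k)}_{l}/N\right)$, obtains the joint rate function $I(u,v)=\tfrac{1}{2}\left(\sqrt{\xi_{l_{opt}}}-\sqrt{u}\right)^2+\tfrac{1}{2}\left(\sqrt{\xi_{l}}-\sqrt{v}\right)^2$, and then minimizes it over the half-plane $\{u\le v\}$ via a KKT argument, showing the constrained infimum sits on the diagonal $u=v$. You instead scalarize first: you form the difference $Y_N=(T^{(k)}_{l}-T^{(k)}_{l_{opt}})/N$ (respectively $(T^{(k)}_{i}+T^{(k)}_{j}-2T^{(k)}_{l_{opt}})/N$), compute its limiting cumulant generating function from the non-central chi-squared MGF, and evaluate the Chernoff exponent $\inf_\theta\Lambda(\theta)$ directly. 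Both limiting CGFs you write down are correct, and the stationary points do give the claimed values: $2\theta^{\star}=\left(\sqrt{\xi_{l_{opt}}}-\sqrt{\xi_{l}}\right)/\left(\sqrt{\xi_{l_{opt}}}+\sqrt{\xi_{l}}\right)$ yields $\Lambda_1(\theta^{\star})=-I_1$, and the analogous root yields $-I_2$ with the factor $6$ arising exactly as you describe. Your route trades the two-dimensional contraction step for a one-dimensional unconstrained minimization, which is shorter and makes the square-root structure of $I_1$ and $I_2$ emerge mechanically; the paper's route has the advantage of exhibiting the full joint rate function, which makes transparent that the dominant error event is the two statistics meeting at the common value $\sqrt{u}=\sqrt{v}=\left(\sqrt{\xi_{l_{opt}}}+\sqrt{\xi_{l}}\right)/2$.

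One caveat, which your write-up glosses over (and which the paper's own proof shares in a different guise): the claim that ``the optimizer lies in the interior whenever the rate is strictly positive'' fails when $\xi^{(k)}_{l}=0$, which is precisely the ideal-beam case the paper uses downstream, where all non-optimal gains vanish. There the first term of $\Lambda_1$ vanishes identically, $\Lambda_1(\theta)=-\xi_{l_{opt}}\theta/(1+2\theta)$ no longer blows up as $\theta\uparrow \tfrac{1}{2}$, essential smoothness (steepness) fails, and the minimizer is the boundary point $\theta^{\star}=\tfrac{1}{2}$. The infimum value $-\xi_{l_{opt}}/4$ still equals $-I_1(\xi_{l_{opt}},0)$ and the conclusion survives --- the Chernoff upper bound needs no steepness, and a matching lower bound follows by tilting with $\theta$ slightly below $\tfrac{1}{2}$ or by direct computation with the central $\chi^2_2$ variate --- but this case needs a sentence of its own rather than a blanket appeal to G\"artner--Ellis; the same remark applies to $I_2$ when $\xi^{(k)}_{i}+\xi^{(k)}_{j}=0$.
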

\begin{proof}
See Appendix~\ref{proof_LDP}.
\end{proof}

\begin{proposition}\label{Proposition_1}
As $N\rightarrow \infty$, the upper bound $p^{(k)}_{up}$ of $p^{(k)}_{miss}$ becomes exact:
\begin{equation}
\lim_{N\uparrow \infty}\frac{p^{(k)}_{miss}}{p^{(k)}_{up}}=1,
\end{equation}
and $p^{(k)}_{miss}$ satisfies:
\begin{equation}\label{Eq_proposition_1}
\lim_{N\uparrow \infty}\frac{1}{N}\log p^{(k)}_{miss} = \lim_{N\uparrow \infty}\frac{1}{N}\log p^{(k)}_{up}= -I_1\left(\xi_{l^{(k)}_{opt}}^{(k)},\xi_{l^{(k)}_0}^{(k)}\right),
\end{equation}
where $l^{(k)}_0 \triangleq \arg\max_{l=1,\ldots,L^{(k)}, l\neq l^{(k)}_{opt}}\xi^{(k)}_l$.
\end{proposition}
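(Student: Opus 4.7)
The plan is to combine Lemma~\ref{Lemma_LDP} with a sandwich argument. Since $p_{low}^{(k)} \le p_{miss}^{(k)} \le p_{up}^{(k)}$, it suffices to show that (i) $\tfrac{1}{N}\log p_{up}^{(k)} \to -I_1(\xi_{l_{opt}^{(k)}}^{(k)},\xi_{l_0^{(k)}}^{(k)})$, and (ii) the correction term $R^{(k)} \triangleq p_{up}^{(k)} - p_{low}^{(k)} = \sum_{i<j,\;i,j\neq l_{opt}^{(k)}} F_{(2,4)}(1|\lambda_{l_{opt}^{(k)}}^{(k)},\lambda_i^{(k)}+\lambda_j^{(k)})$ decays strictly faster than $p_{up}^{(k)}$, so that $p_{low}^{(k)}/p_{up}^{(k)} \to 1$.

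For step (i), I would note that $p_{up}^{(k)}$ is a sum of a \emph{fixed} (independent of $N$) number of terms, each satisfying \eqref{eq:lemma1_1}. A finite sum of exponentials is asymptotically dominated by its slowest-decaying term, so
\[
\lim_{N\uparrow\infty}\tfrac{1}{N}\log p_{up}^{(k)} = -\min_{l\neq l_{opt}^{(k)}} I_1\bigl(\xi_{l_{opt}^{(k)}}^{(k)},\xi_l^{(k)}\bigr).
\]
Because $I_1(a,b) = (\sqrt{a}-\sqrt{b})^2/4$ is monotonically decreasing in $b\in[0,a)$, and $\xi_l^{(k)} < \xi_{l_{opt}^{(k)}}^{(k)}$ for all $l\neq l_{opt}^{(k)}$ by the definition of $l_{opt}^{(k)}$, the minimum is attained at the closest competitor $l_0^{(k)}=\arg\max_{l\neq l_{opt}^{(k)}}\xi_l^{(k)}$, yielding exactly the rate in \eqref{Eq_proposition_1}.

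For step (ii), the same log-sum-exp reasoning combined with \eqref{eq:lemma1_2} gives $\tfrac{1}{N}\log R^{(k)} \to -\min_{i<j} I_2(\xi_{l_{opt}^{(k)}}^{(k)},\xi_i^{(k)},\xi_j^{(k)})$. The heart of the proof is then the strict inequality
\[
I_2\bigl(\xi_{l_{opt}^{(k)}}^{(k)},\xi_i^{(k)},\xi_j^{(k)}\bigr) > I_1\bigl(\xi_{l_{opt}^{(k)}}^{(k)},\xi_{l_0^{(k)}}^{(k)}\bigr)\quad\text{for all } i,j\neq l_{opt}^{(k)}.
\]
To establish this, I would use $\xi_i^{(k)} + \xi_j^{(k)} \le 2\xi_{l_0^{(k)}}^{(k)}$, which implies $\sqrt{\xi_i^{(k)}+\xi_j^{(k)}} \le \sqrt{2}\sqrt{\xi_{l_0^{(k)}}^{(k)}}$, and hence
\[
\sqrt{2\xi_{l_{opt}^{(k)}}^{(k)}} - \sqrt{\xi_i^{(k)}+\xi_j^{(k)}} \;\ge\; \sqrt{2}\bigl(\sqrt{\xi_{l_{opt}^{(k)}}^{(k)}} - \sqrt{\xi_{l_0^{(k)}}^{(k)}}\bigr).
\]
Squaring and dividing by $6$ gives $I_2 \ge (\sqrt{\xi_{l_{opt}^{(k)}}^{(k)}}-\sqrt{\xi_{l_0^{(k)}}^{(k)}})^2/3 > I_1$, as required.

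Combining, $R^{(k)}/p_{up}^{(k)} = e^{-N[\min I_2 - I_1(\xi_{l_{opt}^{(k)}}^{(k)},\xi_{l_0^{(k)}}^{(k)})] + o(N)} \to 0$, so $p_{low}^{(k)}/p_{up}^{(k)} \to 1$ and, by sandwich, $p_{miss}^{(k)}/p_{up}^{(k)} \to 1$; the rate identity \eqref{Eq_proposition_1} then follows from step (i). The main obstacle is the algebraic comparison $I_2 > I_1$: without this strict separation of rates, the Bonferroni correction could be of the same order as $p_{up}^{(k)}$, and the upper bound would not be asymptotically tight. The inequality is clean once one recognizes the constraint $\xi_i^{(k)}+\xi_j^{(k)}\le 2\xi_{l_0^{(k)}}^{(k)}$ and exploits concavity of the square root.
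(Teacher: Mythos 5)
Your proof is correct and follows essentially the same route as the paper's: the Bonferroni sandwich $p^{(k)}_{low}\le p^{(k)}_{miss}\le p^{(k)}_{up}$, identification of the slowest-decaying term indexed by $l_0^{(k)}$ via monotonicity of $I_1$, and the key strict separation $I_2>I_1$ obtained from $\xi_i^{(k)}+\xi_j^{(k)}\le 2\xi_{l_0^{(k)}}^{(k)}$ (the paper phrases this by rewriting $I_2$ as $(\sqrt{\xi_{l_{opt}^{(k)}}^{(k)}}-\sqrt{(\xi_i^{(k)}+\xi_j^{(k)})/2})^2/3$, which is the same algebra, with strictness likewise coming from the $1/3$ versus $1/4$ prefactor). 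No gaps.
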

\begin{proof}
This can be proved by using the results in Lemma~\ref{Lemma_LDP}, see Appendix~\ref{proof_P1}.
\end{proof}

Proposition~\ref{Proposition_1} demonstrates that the probability of misalignment at level $k$ decays exponentially with a rate $I_1\left(\xi_{l^{(k)}_{opt}}^{(k)},\xi_{l^{(k)}_0}^{(k)}\right)$ when $N$ is sufficiently large. To maximize $I_1\left(\xi_{l^{(k)}_{opt}}^{(k)},\xi_{l^{(k)}_0}^{(k)}\right)$ (thus achieve a faster decaying speed as $N$ increases), it is desirable to maximize the difference between $\xi^{(k)}_{l^{(k)}_{opt}}$ and $\xi^{(k)}_{l^{(k)}_0}$. Recalling that
$\xi^{(k)}_l = \frac{2P_Tg^{(k)}_{l}}{\sigma^2}$ with $g^{(k)}_{l} = |\alpha|^2 G^{(k)}_{l}(\psi,\phi)$, the difference between $\xi^{(k)}_{l^{(k)}_{opt}}$ and $\xi^{(k)}_{l^{(k)}_0}$ is increased by reducing $G^{(k)}_{l^{(k)}_0}(\psi,\phi)$ or increasing $G^{(k)}_{l^{(k)}_{opt}}(\psi,\phi)$. One way of reducing $G^{(k)}_{l^{(k)}_0}(\psi,\phi)$ is to impose the following constraint in synthesizing the transmit beamformer:
\begin{equation}\label{criteria_1}
W^{(k)}_{l^{(k)}}(\psi)=0, ~\text{if}~\psi\notin \Psi_{\mathbf{w}^{(k)}_{l}}.
\end{equation}
A similar constraint can be imposed for the synthesis of receive combiners.
As the AoD $\psi$ and the AoA $\phi$ vary both spatially and temporally, preference may not be given to any particular direction. In synthesizing the transmit beamformers, this imposes another constraint:
\begin{equation}\label{criteria_2}
W^{(k)}_{l^{(k)}}(\psi)=C, ~\text{if}~\psi\in \Psi_{\mathbf{w}^{(k)}_{l}},
\end{equation}
where $C$ is a constant to be maximized and is dependent on the beamwidth. Aggregating \eqref{criteria_1} and \eqref{criteria_2} yields the beam-synthesis criteria proposed in~\cite{Xiao2016,alkhateeb2014channel}.

The above analysis provides theoretical justifications for existing beam-synthesis criteria for mmWave beam-alignment. When the codebooks used for mmWave beam-alignment are properly designed, i.e., $\xi^{(k)}_{l^{(k)}_{opt}} \gg \xi^{(k)}_{l^{(k)}_0}\approx0$, the rate function $I_1\left(\xi_{l^{(k)}_{opt}}^{(k)},\xi_{l^{(k)}_0}^{(k)}\right)$ can be simplified and the upper bound is approximated as:
\begin{align}\label{LDP_appr}
p^{(k)}_{up} &\approx \left(L^{(k)}-1\right)\text{exp}{\left({-N\xi^{(k)}_{l^{(k)}_{opt}}/{4}}\right)} \\
&= \left(L^{(k)}-1\right)\text{exp}{\left(-NP_Tg^{(k)}_{l^{(k)}_{opt}}/{(2\sigma^2)}\right)}.
\end{align}

\par We validate our analysis for the ideal case with $\xi^{(k)}_l=0$, $\forall l\neq l^{(k)}_{opt}$ in Fig.~\ref{fig:single_level}. {Specifically, we consider $L^{(k)}=8$, $G^{(k)}=16$ and the signal to noise ratio without beamforming is $\text{SNR}=\frac{P_T|\alpha|^2}{\sigma^2}=-15$~dB. We plot the upper bound of the probability of misalignment derived in \eqref{upper_bound_f}, the approximation to the upper bound obtained from the LDP (given by \eqref{LDP_appr}) and the probability of misalignment obtained from simulations (according to \eqref{Eq:miss_align_1}) versus the total number of pilot symbols used for beam-alignment at level $k$, i.e., $NL^{(k)}$. It can be seen that the upper bound approaches  the simulation results as $N$ increases and the LDP approximation characterizes well the slope of the $p^{(k)}_{miss}$ as $N$ increases}. It can also be seen that the asymptotic results are consistent to simulations in a wide range of values of $N$ (e.g., when $NL^{(k)}$ is large such that $p^{(k)}_{miss}<10^{-2}$). This provides us confidence to rely on the asymptotic results to further analyze the performance of a $K$-level search.

\begin{figure}[t]
\centering
\includegraphics[width=0.5\textwidth]{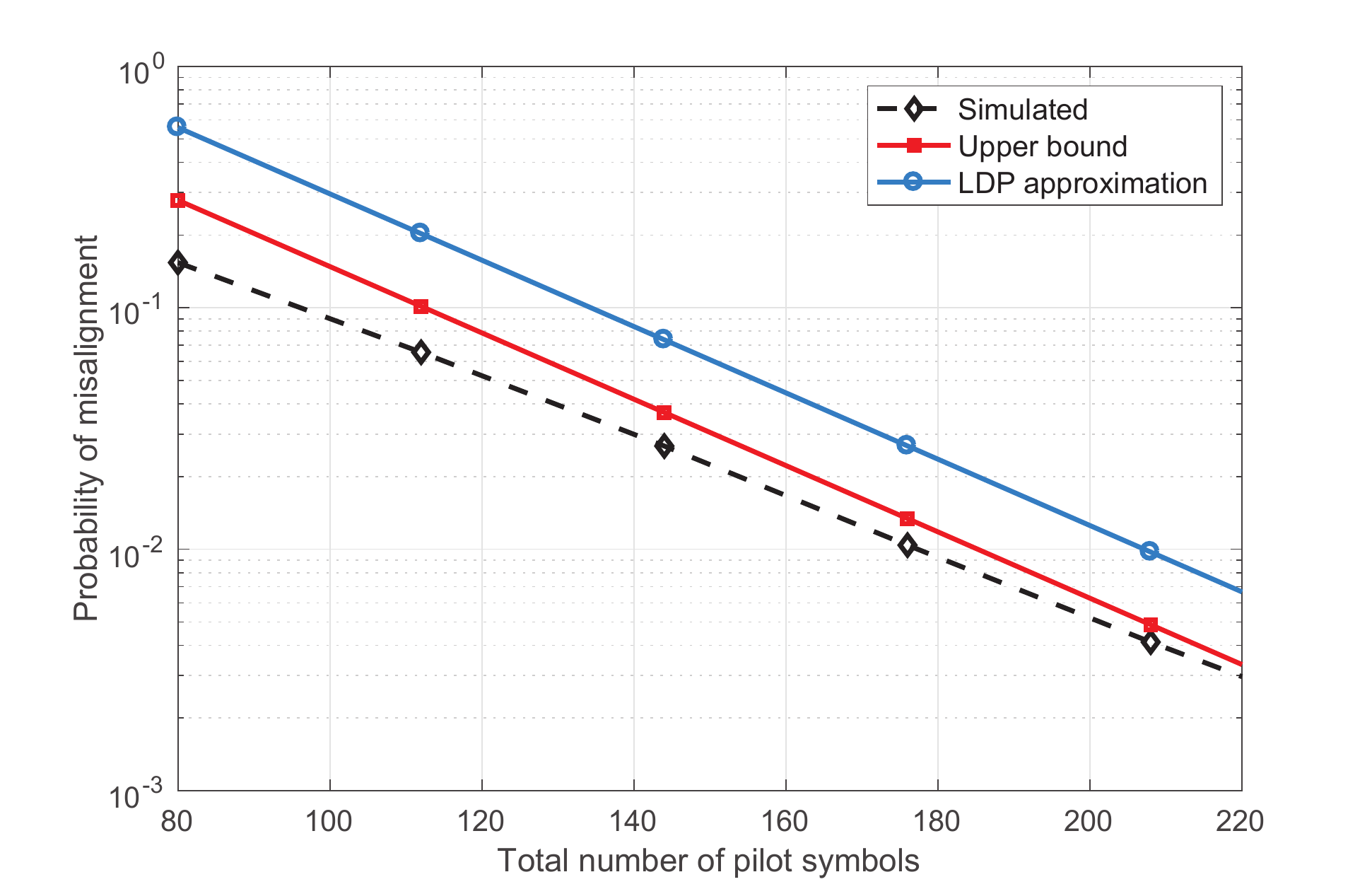}
\caption{Probability of misalignment versus the total number of pilot symbols at level $k$ (i.e, $NL^{(k)}$): $\text{SNR}=-15$ dB; $L^{(k)}=8$; $G^{(k)}=16$.}
\label{fig:single_level}
\end{figure}

\subsection{Overall Probability of Misalignment in a $K$-level Search}\label{sub_sec_multi}
As mentioned earlier in this section, the occurrence of ${\hat{l}^{(k)}_{opt}} \neq {l^{(k)}_{opt}}$ at any level leads to misalignment. The overall probability of misalignment during a $K$-level search can thus be represented as:
\begin{align}
p_{miss}(K) = \sum_{k=1}^{K}p^{(k)}_{miss}\prod_{m=0}^{k-1}\left[1-p^{(m)}_{miss}\right], \label{Overall_upperbound}
\end{align}
with a convention $p^{(0)}_{miss}=0$.

\par Using the results presented in Proposition~\ref{Proposition_1}, it can be proved that the asymptotic behavior of $p_{miss}(K)$ is determined by the level with the smallest decay rate function. We now present this result in the following proposition.

\begin{proposition}\label{Proposition_2}
 As $N\rightarrow \infty$, the overall probability of miss-detection $p_{miss}(K)$ is dominated by the miss-detection rate at the level with the smallest decay rate function. Namely,
 if $k^*=\arg\min_k I_1\left(\xi_{l^{(k)}_{opt}}^{(k)},\xi_{l^{(k)}_0}^{(k)}\right)$:
 \begin{equation}\label{Eq:proposition_2}
 \lim_{N\uparrow \infty} \frac{1}{N}\log{p_{miss}(K)} = -I_1\left(\xi_{l^{(k^*)}_{opt}}^{(k^*)},\xi_{l^{(k^*)}_0}^{(k^*)}\right).
 \end{equation}
\end{proposition}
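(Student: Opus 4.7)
The plan is to prove Proposition~\ref{Proposition_2} by a sandwich argument, bounding $\frac{1}{N}\log p_{miss}(K)$ above and below by quantities that both converge to $-I_1\bigl(\xi_{l^{(k^*)}_{opt}}^{(k^*)},\xi_{l^{(k^*)}_0}^{(k^*)}\bigr)$. The input is Proposition~\ref{Proposition_1}, which, for each level $k$, gives the per-level exponential decay rate $I_1^{(k)} := I_1\bigl(\xi_{l^{(k)}_{opt}}^{(k)},\xi_{l^{(k)}_0}^{(k)}\bigr) > 0$. The key observation, given the sum-of-products structure in \eqref{Overall_upperbound}, is that a finite sum of exponentially decaying terms is dominated on the exponential scale by its slowest-decaying summand, which by the definition of $k^*$ is the term at level $k^*$.

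For the upper bound, I would use $\prod_{m=0}^{k-1}(1-p^{(m)}_{miss}) \le 1$ to get $p_{miss}(K) \le \sum_{k=1}^{K} p^{(k)}_{miss} \le K \max_{k} p^{(k)}_{miss}$. Taking $\frac{1}{N}\log$, using $\frac{\log K}{N} \to 0$, and invoking Proposition~\ref{Proposition_1} at each level, I obtain $\limsup_{N\uparrow\infty}\frac{1}{N}\log p_{miss}(K) \le -\min_{k} I_1^{(k)} = -I_1^{(k^*)}$.

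For the lower bound, I would drop all but the $k=k^*$ term from the nonnegative sum in \eqref{Overall_upperbound}, giving $p_{miss}(K) \ge p^{(k^*)}_{miss}\prod_{m=1}^{k^*-1}(1-p^{(m)}_{miss})$. Since each $I_1^{(m)} > 0$, Proposition~\ref{Proposition_1} implies $p^{(m)}_{miss} \to 0$ exponentially fast and hence $\log(1-p^{(m)}_{miss}) \to 0$ as $N \to \infty$; because there are only $k^* - 1 < K$ such factors, their aggregate contribution vanishes after dividing by $N$. Thus $\liminf_{N\uparrow\infty}\frac{1}{N}\log p_{miss}(K) \ge \lim_{N\uparrow\infty} \frac{1}{N}\log p^{(k^*)}_{miss} = -I_1^{(k^*)}$. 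Matching the two bounds yields \eqref{Eq:proposition_2}.

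The argument is essentially bookkeeping rather than mathematically challenging, so I do not anticipate a substantive obstacle; the only subtle point is ensuring that the finite correction factors $1 - p^{(m)}_{miss}$ do not alter the exponential rate. This is handled by the facts that $K$ is finite (so $\frac{\log K}{N} \to 0$) and that every $p^{(m)}_{miss}$ decays strictly exponentially (so $\frac{1}{N}\sum_{m<k^*}\log(1-p^{(m)}_{miss}) \to 0$). No further appeal to Lemma~\ref{Lemma_LDP} is needed beyond what is already packaged in Proposition~\ref{Proposition_1}.
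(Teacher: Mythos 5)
Your proposal is correct and follows essentially the same route as the paper: both arguments rest solely on Proposition~\ref{Proposition_1} together with the observation that in the finite sum \eqref{Overall_upperbound} the level with the smallest rate function dominates on the exponential scale, while the factors $\prod_{m}\bigl(1-p^{(m)}_{miss}\bigr)$ and the multiplicity $K$ contribute nothing after dividing by $N$. The only cosmetic difference is that the paper shows $p^{(j)}_{miss}\prod_{m<j}\bigl(1-p^{(m)}_{miss}\bigr)/p^{(k^*)}_{miss}\to 0$ for $j\neq k^*$ and concludes $p_{miss}(K)/p^{(k^*)}_{miss}\to 1$ before taking logarithms, whereas you sandwich $\frac{1}{N}\log p_{miss}(K)$ directly via matching $\limsup$ and $\liminf$ bounds.
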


\begin{proof}
See Appendix~\ref{proof_proposition_2}.
\end{proof}

\subsection{Hierarchical Search v.s. Exhaustive Search}\label{subsec:HSvsES}
Leveraging analytical results in Section~\ref{sub_sec_single} and Section~\ref{sub_sec_multi}, we now compare the relative performance between the hierarchical search and the exhaustive search in the asymptotic regime.
We confine our analysis by considering that the codebooks
have \emph{ideal} beam patterns that satisfy \eqref{criteria_1} and \eqref{criteria_2}. This means that each beamformer has a constant gain in its intended coverage interval and zero gain outside the interval (the coverage intervals of the codewords at the same level are not overlapping).
We further suppose that beamformers at the same level have the same beamwidth. In this case, the gain of the BS/UE beam within their coverage intervals is a constant and can be represented as:
\begin{equation}\label{beamforming_gain}
W^{(k)} = \frac{4\pi}{|\Omega_T|/L_T^{(k)}}~\text{and}~F^{(k)} = \frac{4\pi}{|\Omega_R|/L_R^{(k)}},
\end{equation}
where $\Omega_T$ and $\Omega_R$ are the solid angles spanned by the entire AoD and AoA interval, $\Phi$ and $\Psi$, respectively, and thus ${|\Omega_T|}/{L_T^{(k)}}$ and ${|\Omega_R|}/{L_R^{(k)}}$ correspond to the solid angles spanned by intervals $\Psi_{\mathbf{w}_{l}^{(k)}}$ and $\Phi_{\mathbf{f}_{l}^{(k)}}$, respectively.

\par It is noted that with these ideal beams, we have
 \begin{equation}\label{ideal_noncental_parameter}
 \xi^{(k)}_{l}=\left\{
    \begin{array}{ll}
         \frac{2|\alpha|^2 P_T W^{(k)} F^{(k)}}{\sigma^2}, & l=l^{(k)}_{opt},\\
        0,& \text{otherwise}.
    \end{array}
 \right.
 \end{equation}
From \eqref{beamforming_gain} and \eqref{ideal_noncental_parameter}, it can be seen that the $K$-level search is dominated by the first level, due to the widest beamwidth (thus the smallest beamforming gain) at this level.

We can now proceed to the performance comparison. For fair comparison, we consider that the total number of pilot symbols (thus the time) used for beam alignment is the same for both the hierarchical search and the exhaustive search, and is denoted as $N_{tot}$.

Recall that the total number of beamformer/combiner pairs examined in the $K$-level search is $L$ and the total number of beamformer/combiner pairs examined in the exhaustive search is $L_{ex}$ with $L_{ex} \gg L$. For each BS/UE beam pair, the numbers of pilot symbols transmitted in the $K$-level search and the exhaustive search are thus $N = N_{tot}/L$ and $N_{ex} = N_{tot}/L_{ex}$, respectively.

From Proposition~\ref{Proposition_1}, it can be seen that the probability of misalignment of the exhaustive search ($p^{ex}_{miss}$) satisfies
\begin{align}
\lim_{N_{tot}\uparrow \infty}\frac{1}{N_{tot}}\log p^{ex}_{miss} &= -\frac{1}{L_{ex}}\frac{\xi^{(K)}_{l^{(K)}_{opt}}}{4} \nonumber\\
&= -\frac{|\alpha|^2P_T}{ 2\sigma^2}\frac{4\pi}{|\Omega_T|}\frac{4\pi}{|\Omega_R|}, \label{eq:ex_compare1}\\
&\triangleq -I^{ex} \label{eq:ex_compare}
\end{align}
where \eqref{eq:ex_compare1} follows from~\eqref{beamforming_gain}-\eqref{ideal_noncental_parameter} and $L_{ex} = L^{(K)}_TL^{(K)}_R$. On the other hand, according to Proposition~\ref{Proposition_2}, it can be seen that the probability of misalignment of the $K$-level search ($p_{miss}(K)$) satisfies:
\begin{align}
\lim_{N_{tot}\uparrow \infty}\frac{1}{N_{tot}}\log p_{miss}(K) &= -\frac{1}{L}\frac{\xi^{(1)}_{l^{(1)}_{opt}}}{4} \nonumber\\
&=-\frac{L^{(1)}}{L}\frac{|\alpha|^2P_T}{ 2\sigma^2}\frac{4\pi}{|\Omega_T|}\frac{4\pi}{|\Omega_R|} \nonumber \\
&=-\frac{L^{(1)}}{L}I^{ex}, \label{eq:K_level_compare}
\end{align}
recalling that $L^{(1)} = L^{(1)}_TL^{(1)}_R$.

Since $L^{(1)}<L$ (as $L = \sum_{k=1}^{K}L^{(k)}$), comparing~\eqref{eq:ex_compare} and~\eqref{eq:K_level_compare}, it can be concluded that the exhaustive search yields a faster exponentially decaying rate than the $K$-level search when $N_{tot}$ is sufficiently large. This implies that for sufficiently large $N_{tot}$, the hierarchical search (with $K>1$) is outperformed by the exhaustive search. Although the analysis is carried out in the asymptotic regime, our numerical results demonstrate that this relative performance trend still holds for a wide range of finite $N_{tot}$, see Fig.~\ref{fig:single_path_p_miss} (note that the hierarchical search discussed so far corresponds to the curve with legend ``Hierarchical search (equal pilot length allocation)").

When producing Fig.~\ref{fig:single_path_p_miss}, we consider a ULA (with $N_T=64$) at the BS and a ULA (with $N_R=4$) at the UE, where {the AoD and AoA intervals simulated are assumed to be $[-30^{\circ},+30^{\circ}]$ and $[0^{\circ},360^{\circ}]$, respectively. Such a choice of AoD interval corresponds to the case where the BS intends to cover a $60^{\circ}$ sector.} At the BS, $K=5$ codebooks are employed with $L^{(k)}_T = 2^k$, $k=1,\ldots,5$, while at the UE, a fixed codebook with $4$ codewords are used at all the $K$ levels, with $L^{(k)}_R=4$, $k=1,\ldots,5$. At the first level of the hierarchical search, $2$ transmit beamformers and $4$ receive combiners are scanned. At levels $k>1$, the receive combiner is fixed at the best one detected at the first level, while $2$ finer-resolution transmit beamformers are scanned at each level. Therefore, $L^{(1)} = 2\times 4$ and $L^{(k)} = 2$ for $k \in [2,5]$, and $G^{(k)} = W^{(k)} F^{(k)}=2^{k+1}\times 4$ with ideal beam patterns.

From Fig.~\ref{fig:single_path_p_miss}, it can be seen that the exhaustive search outperforms the hierarchical search offering a lower probability of misalignment for any fixed $N_{tot}$. In other words, to reach a low targeted probability of misalignment, the required training overhead (in terms of total number of pilot symbols) by the exhaustive search is less than that of the hierarchical search. These results hold for a wide range of overhead levels, thus validating our asymptotic analysis.

\begin{figure}[t]
\centering
\includegraphics[width=0.5\textwidth]{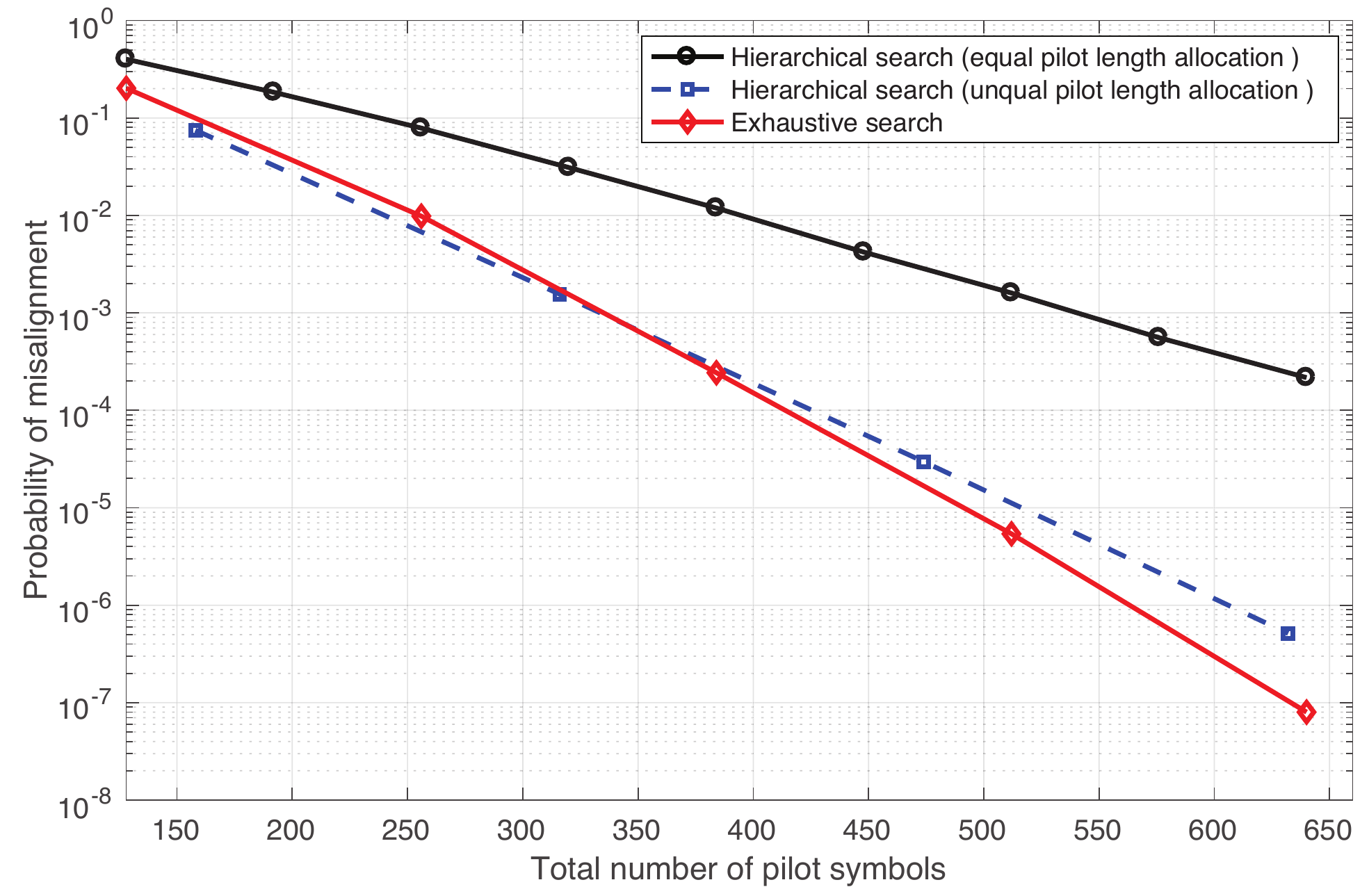}
\caption{Probability of misalignment versus the total number of pilot symbols ($N_{tot}$): $\text{SNR}=-15$ dB.}
\label{fig:single_path_p_miss}
\end{figure}

\begin{remark}\label{remark:unequal}
In the above comparison, the pilot length is made equal for each search level in a hierarchical search. Under such an assumption and with ideal beamformers, Proposition~\ref{Proposition_2} states that the first level will be the bottleneck on the overall misalignment performance for the hierarchical search. This can be seen from the rate functions that $\xi_{l^{(k)}_{opt}}^{(k)} \propto L_T^{(k)}L_R^{(k)}$ and that $\xi_{l^{(1)}_{opt}}^{(1)}$ is the minima among all $\xi_{l^{(k)}_{opt}}^{(k)}$s.

 \par To remove this bottleneck, one might consider allocating unequal pilot sequence length (denoted by $N^{(k)}$) to different search levels to equalize their performance. Towards this end, $N^{(k)}$ should be chosen to be inversely proportional to $L_T^{(k)}L_R^{(k)}$. {Without loss of optimality, assume that $N^{(k)} = N_{tot}\gamma /(L_T^{(k)}L_R^{(k)})$ for some $\gamma$ to be determined. Since $L^{(k)}$ beam pairs are examined for each level $k$, the total number of pilot symbols for a $K$-level search is given by $\sum_{k=1}^{K} L^{(k)}N^{(k)} = \sum_{k=1}^{K} L^{(k)}N_{tot}\gamma/(L_T^{(k)}L_R^{(k)}) =N_{tot}$. This hence implies that $\gamma = (\sum_{k=1}^{K} \frac{L^{(k)}}{L_T^{(k)}L_R^{(k)}})^{-1}$ (Note that $\gamma<1$ since $L^{(1)} = L_T^{(1)}L_R^{(1)}$).} As a result, the probability of misalignment at level $k$ satisfies $\frac{1}{N_{tot}}\lim_{N_{tot}\rightarrow \infty}p^{(k)}_{miss} =-\gamma I^{ex}$, following similar steps to those in~\eqref{eq:K_level_compare}. When $\gamma >L^{(1)}/L$, the hierarchical search with unequal pilot sequence length allocation will outperform the one with equal allocation as $N_{tot}$ goes large, see Fig.~\ref{fig:single_path_p_miss} for an illustration. {From this example, it can also be seen that the hierarchical search with unequal pilot sequence length allocation slightly outperform the exhaustive search when $N_{tot}$ is not that large. However, it is still dominated by the exhaustive search asymptotically, because $\gamma <1$.}
\par {Moreover, ideal beam patterns (as considered in the optimization of the pilot sequence lengths above) are not realizable in practice, for finite antenna arrays. With practical synthesized beamformers, when we allocated $N^{(k)}$'s as if beamformers were ideal,  no improvement was observed in our numerical experiments, see Section~\ref{sec:simulation}. At present, it is not clear that one can do better than the equal pilot length allocation in the hierarchical search.}\remarkend
\end{remark}

\section{Performance Evaluation Under Multi-Path Channel Model and Practical Beam Codebooks}\label{sec:simulation}
\par In the previous section, we have uncovered the relative performance between the exhaustive search and the hierarchical search under rank-one channel model and with ideal beam patterns. In this section, we generalize the study to the case with general multi-path channel model and practical synthesized beams. We will show by numerical results that the key insights obtained before continue to apply to these more realistic setups.

\subsection{Multi-Path Channel Model and Practical Beam Codebooks}
\par Following the measurement results from NYU~\cite{6834753}, the mmWave channels simulated are modeled with a limited number of multipath components, coming from distinct AoAs and AoDs. In particular, in the line-of-sight~(LOS) scenario, the channel is modeled as a Rician channel, in which a dominant path comes from AoA $\phi$ and AoD $\psi$. The Rician ${\cal K}$-factor (the ratio of the energy of the dominant path to the sum of the energy of the scattering components) is set to $13.2$ dB, according to the measurement results in \cite{muhi2010modelling}. In the non-LOS scenario, the channel is modeled as the sum of $M$ paths, each with a different AoA $\phi_m$ and AoD $\psi_m$, $m=1,\ldots,M$. The channel of each path is again assumed to be Rician, with ${\cal K}$-factor set to $6$ dB for all the paths~\cite{samimi201528}. The number of paths is $M=\max\{1,\varsigma\}$, where $\varsigma$ is a Poisson number with mean $1.8$, and the power fractions of the $M$ paths are generated by following the method in~\cite{6834753}. The SNR is defined as the ratio of the averaged total received power from all multipaths to the noise power.

\par Throughout the simulations, we consider the same AoA/AoD intervals, BS/UE antenna configuration and codebook setup (except with practical beamformers here) used to produce Fig.~\ref{fig:single_path_p_miss} (see Section~\ref{subsec:HSvsES}). Assume that the AoD $\psi$ ($\psi_m$s) is drawn uniformly from the intended AoD interval $[-30^{\circ},+30^{\circ}]$ (i.e., a $60^{\circ}$ sector), while the AoA $\phi$ ($\phi_m$s) of the multipath component is drawn uniformly from $[0^\circ,360^\circ]$. We adopt the {joint subarray and deactivation method by Xiao~\textit{et~al.}~\cite{Xiao2016} for constructing the analog beamforming codewords, see the relevant details in \cite[Section III.C-3)]{Xiao2016}}. The exhaustive search is carried out based on the transmit codebook at level $k=5$ and the receive codebook with 4 combiners. Thus the corresponding beam search space is $L_{ex} = 2^5 \times 4  = 128$. The results presented are calculated by averaging $10^6$ simulation trials.

\subsection{Results for LOS Scenario}
\par We first consider the LOS scenario. Fig.~\ref{Fig:p_miss_LOS} (a) and (b) plot the average misalignment probability of both the hierarchical search and the exhaustive search versus the total number of pilot symbols (maximum on the order of 1000 symbols) with $\text{SNR}= -15$ dB and $-10$ dB, respectively.

\begin{figure}[t]
\centering
\begin{subfigure}
\centering
\includegraphics[width=0.46\textwidth]{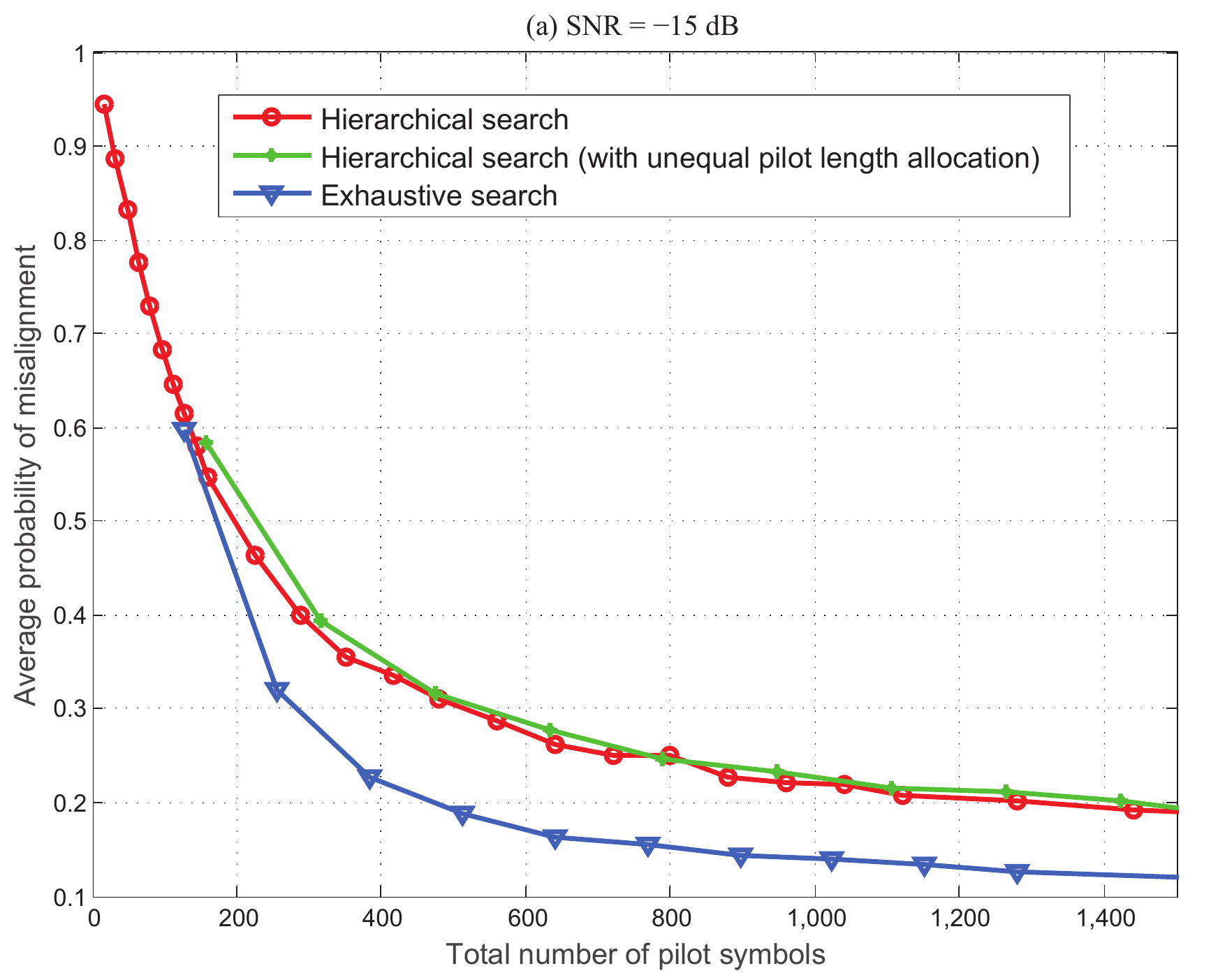}
\end{subfigure}
\begin{subfigure}
\centering
\includegraphics[width=0.46\textwidth]{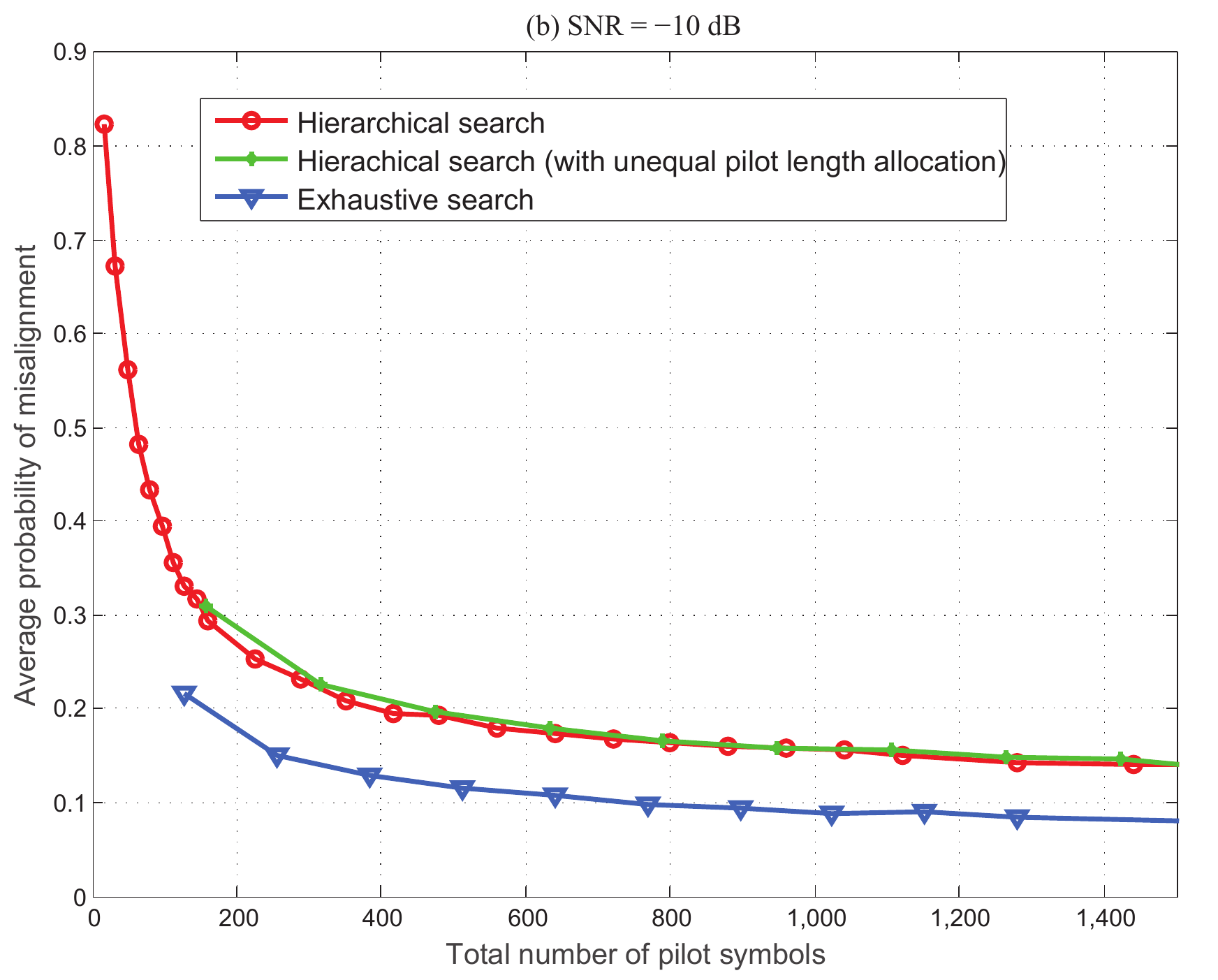}
\end{subfigure}
\caption{Average probability of misalignment: the LOS scenario.}
\label{Fig:p_miss_LOS}
\end{figure}


\begin{figure}[t]
\centering
\begin{subfigure}
\centering
\includegraphics[width=0.48\textwidth]{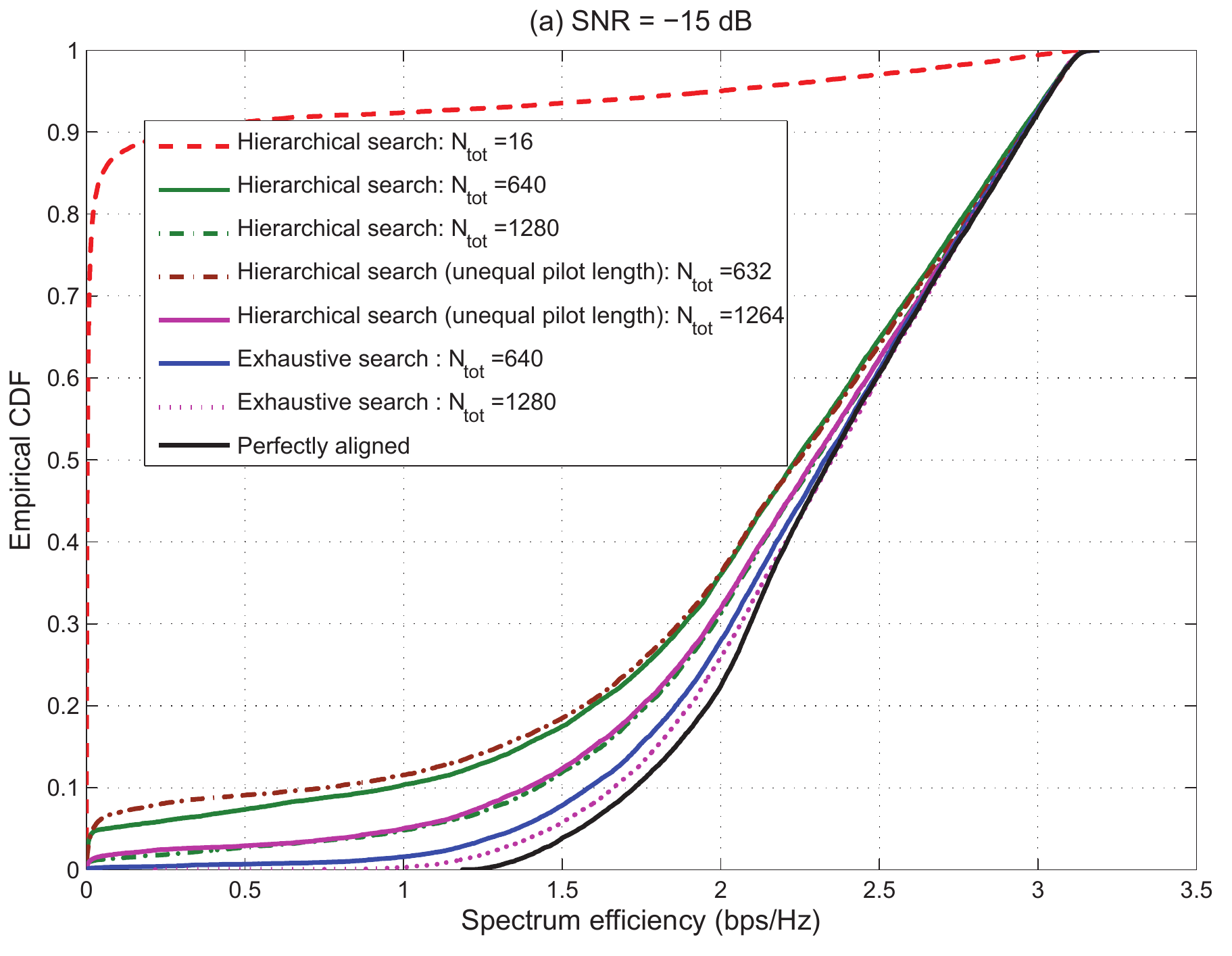}
\end{subfigure}
\begin{subfigure}
\centering
\includegraphics[width=0.48\textwidth]{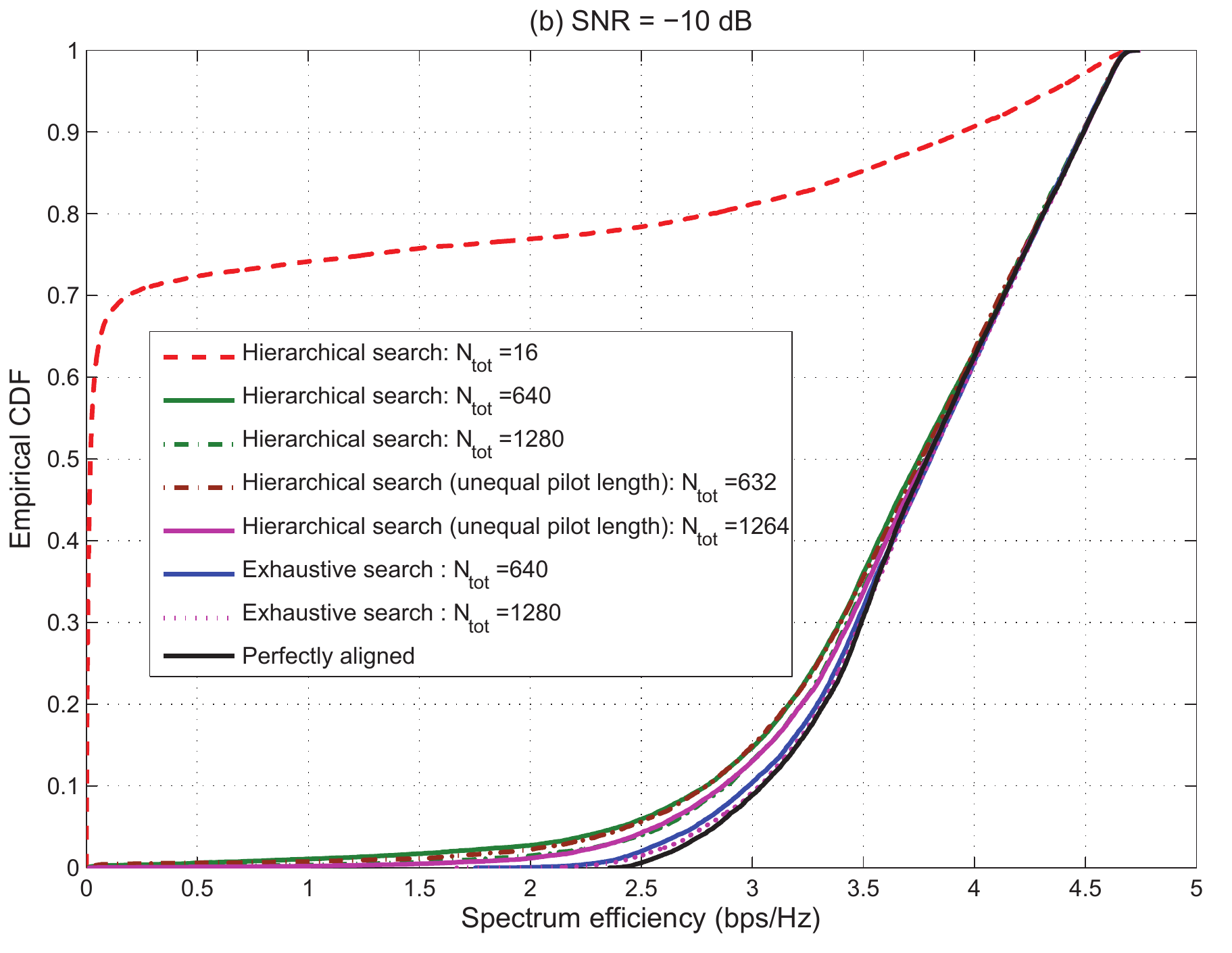}
\end{subfigure}
\caption{Empirical CDF of the spectrum efficiency after transmit beamforming and receive combining: the LOS scenario.}
\label{Fig:CDF_LOS}
\end{figure}

These choices of pilot symbol length and SNR values simulated are motivated by a practical design: Consider a mmWave communication system operating at $73$~GHz with a coherence bandwidth of $100$~MHz. Therefore, a training slot consisting of $1000$ pilot symbols amounts to a total training time of $10$ $\mu$s, which corresponds to only about $13.6\%$ of the coherence time for a UE of relatively low-mobility at $10$~m/s. In addition, suppose that mmWave BS transmits at peak power of $P_T = 15$~dBm, and the mmWave propagation follows an average path-loss model with relevant parameters (e.g., $\alpha = 69.8$ and $\beta = 2$) adopted from~\cite{6834753}. The SNRs considered thus correspond to typical (average) pre-beamforming SNR values for UEs located roughly $50$ meters and $30$ meters from the BS, respectively. It is certainly of importance to accommodate these UEs with fast and reliable beam search algorithms, as the mmWave cellular is expected to provide good outdoor coverage on the order of $100$ meters~\cite{6834753}.

\par It can be seen from Fig.~\ref{Fig:p_miss_LOS} that the exhaustive search outperforms the hierarchical search under the SNRs considered, as long as the number of pilot symbols is sufficiently large (e.g., $\ge$ 128 in this example) so that the exhaustive search is feasible. For instance, when $N_{tot} = 640$ and $\text{SNR}= -15$ dB, the exhaustive search provides $\frac{0.263-0.163}{0.263} = 38\%$ improvement over the hierarchical search in terms of misalignment probability. This performance trend generally agrees with the theoretical insights we have obtained under the rank-one (single-path) channel model.

\par We have also compared the achievable spectral efficiency under both the exhaustive search and the hierarchical search, {which is calculated according to the effective post-beamforming SNR using the best BS/UE beam pair determined in the beam training phase}. Fig.~\ref{Fig:CDF_LOS} plots the cumulative density function (CDF) of achievable spectral efficiency for both search strategies with the total number of pilot symbols $N_{tot}  = 16$, $640$ and $1280$, respectively. With $N_{tot} = 16$, the exhaustive search is infeasible in this case ({as $N_{tot}$ is smaller than the number of beam pairs to be evaluated, hence there is no result to plot}), and the hierarchical search achieves very limited spectral efficiency due to the poor misalignment performance. With more training resource, the spectral efficiency gets boosted under both the hierarchical search and the exhaustive search, as seen from Fig.~\ref{Fig:CDF_LOS}. However, a considerable performance difference is observed between these two search strategies, which is consistent with the misalignment probability results in Fig.~\ref{Fig:p_miss_LOS}.

\par In particular, the exhaustive search is found to achieve significantly better performance for the worst-case users. For instance, when $\text{SNR}=-15$ dB and $N_{tot} = 640$, the exhaustive search improves the $10$-percentile spectrum efficiency from $0.96$ bps/Hz (offered by the hierarchical search) to $1.59$ bps/Hz, an $65.6\%$ improvement. We have also observed that the performance difference diminishes in the high SNR regime (not shown here), since both the schemes can provide high reliability in beam-alignment.

\par In addition, we have included the performance of a hierarchical search with unequal pilot sequence allocation for each level as discussed in Remark~\ref{remark:unequal}. It can be confirmed that little benefit (in terms of both misalignment probability and achievable spectral efficiency) is reaped from this sophisticated allocation strategy in practical setups with non-ideal synthesized beamformers. Hence, we only focus on the strategy of equal pilot length allocation in the remaining results.

\begin{figure}[t]
\centering
\begin{subfigure}
\centering
\includegraphics[width=0.45\textwidth]{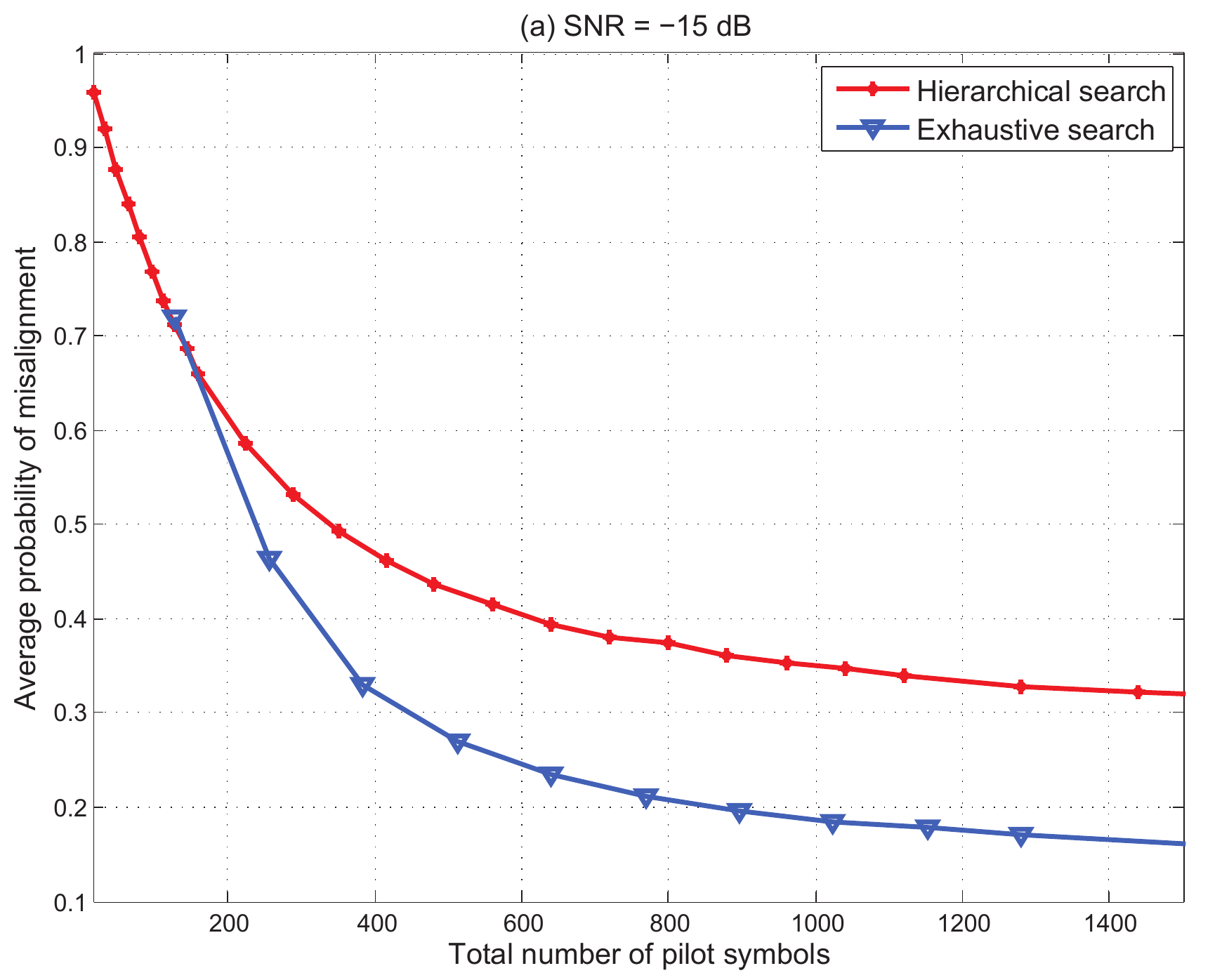}
\end{subfigure}
\begin{subfigure}
\centering
\includegraphics[width=0.45\textwidth]{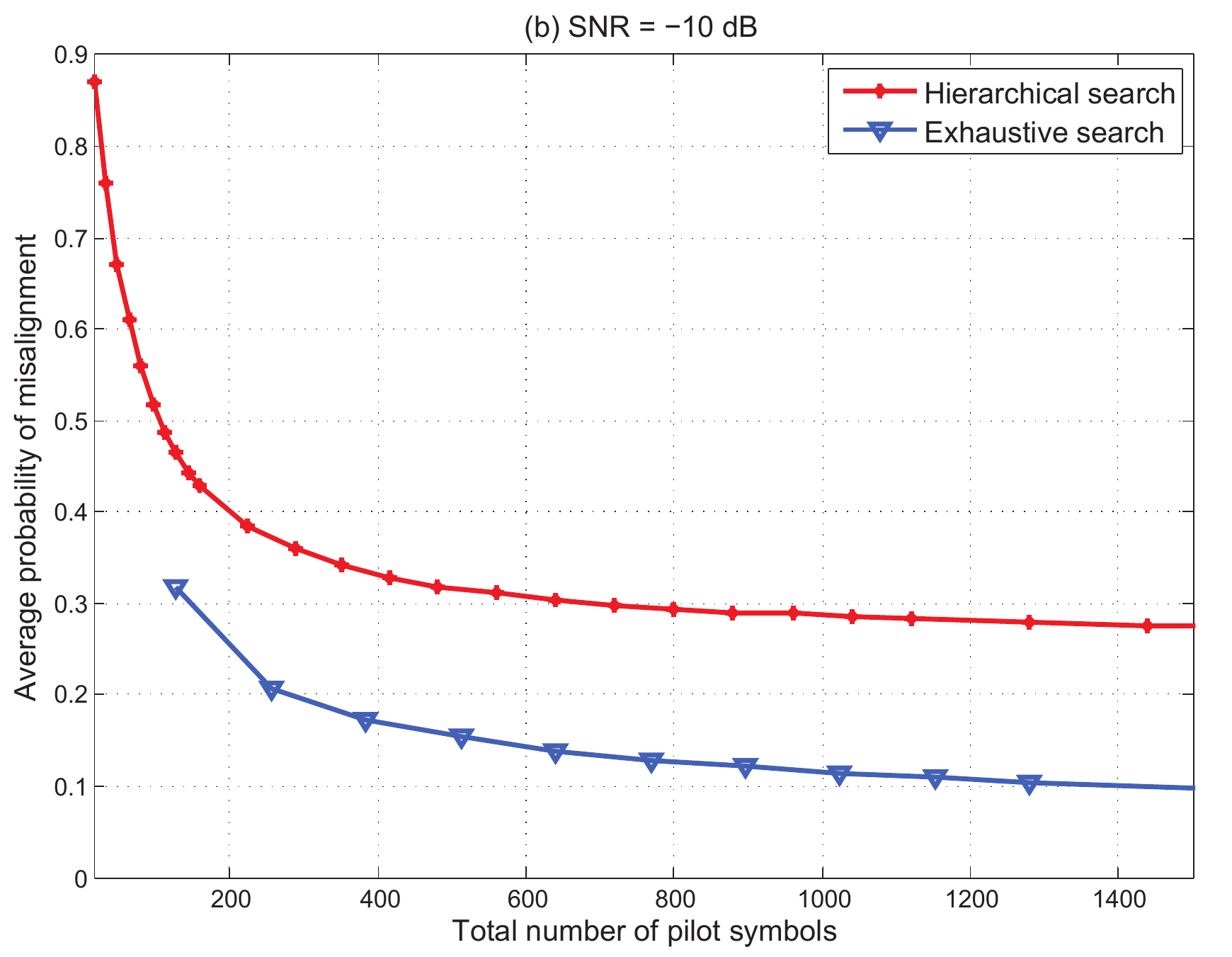}
\end{subfigure}
\caption{Average probability of misalignment: the non-LOS scenario.}
\label{Fig:p_miss_nonLOS}
\end{figure}

\subsection{Results for Non-LOS Scenario}

\begin{figure}[t]
\centering
\begin{subfigure}
\centering
\includegraphics[width=0.46\textwidth]{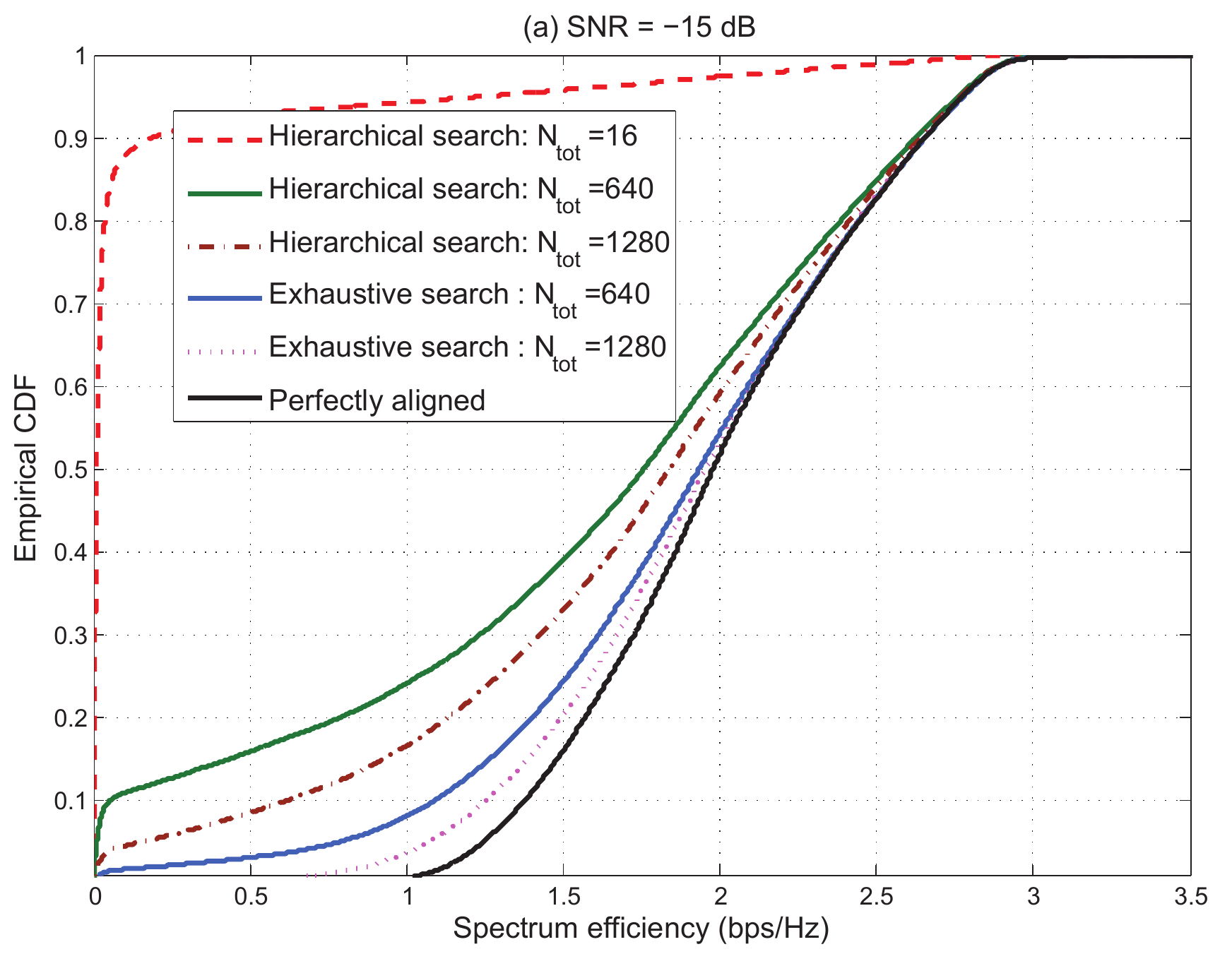}
\end{subfigure}
\begin{subfigure}
\centering
\includegraphics[width=0.46\textwidth]{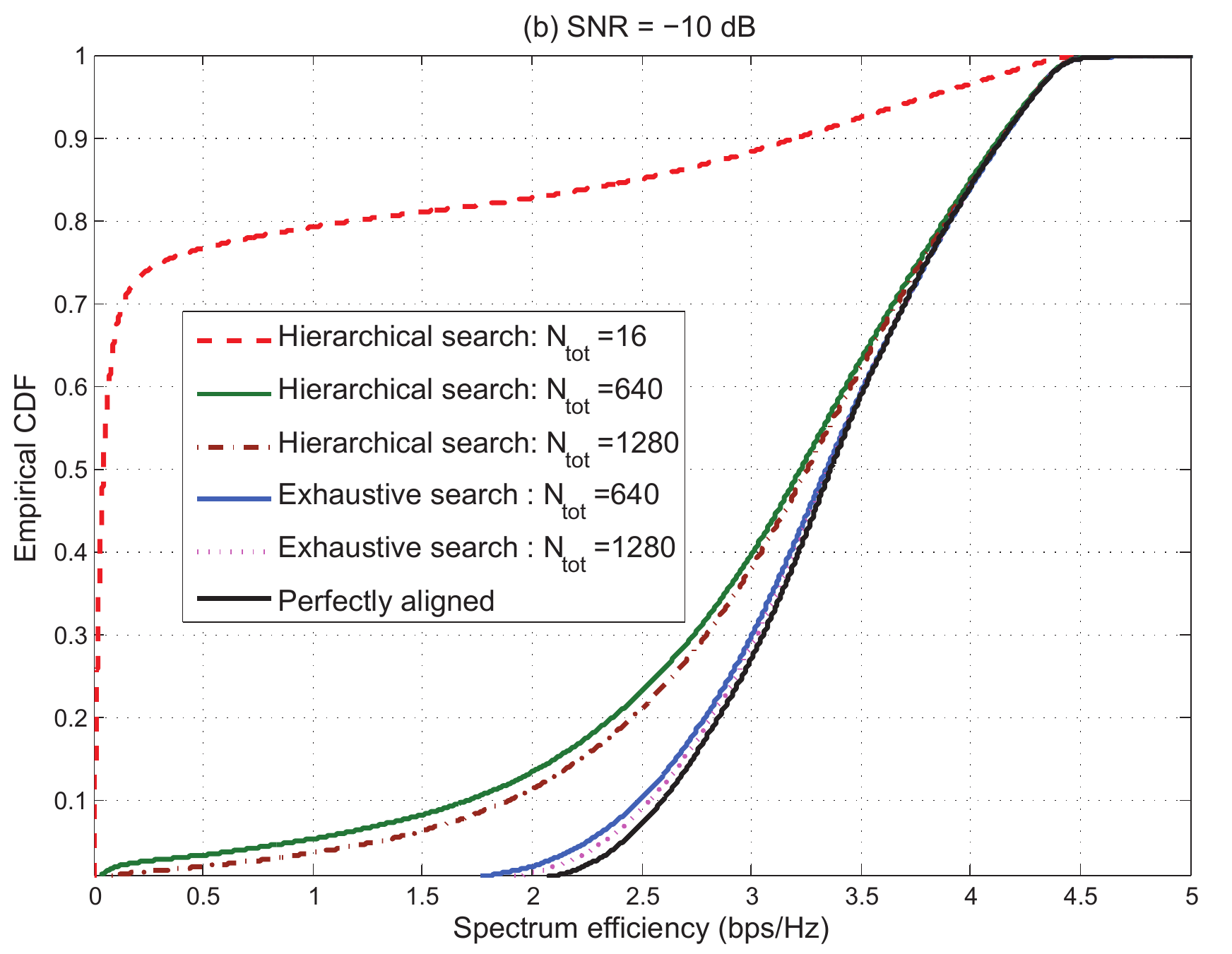}
\end{subfigure}
\caption{Empirical CDF of the spectrum efficiency after transmit beamforming and receive combining: the non-LOS scenario.}
\label{Fig:CDF_nonLOS}
\end{figure}

We now consider the NLOS scenario. Similar to the LOS evaluation, Fig.~\ref{Fig:p_miss_nonLOS} (a) and (b) plot the misalignment probability versus the number of pilot symbols under both the hierarchical search and the exhaustive search with $\text{SNR}= -15$ dB and $\text{SNR}= -10$ dB, respectively, while Fig.~\ref{Fig:CDF_nonLOS} (a) and (b) plot the CDF of the corresponding achievable spectral efficiency after beam-alignment.

The relative performance trends between the exhaustive search and the hierarchical search are similar to those in the LOS evaluation. But one may notice that the improvement of exhaustive search over hierarchical search (in terms of either misalignment probability or achievable spectral efficiency) in the NLOS scenario is more pronounced than that in the LOS scenario. For example, when $N_{tot} = 640$ and $\text{SNR}= -15$~dB, the exhaustive search now provides $\frac{0.394-0.235}{0.394} = 40.4\%$ improvement over the hierarchical search in terms of misalignment probability. This is due to the fact that signal received from the strongest multipath has a lower effective SNR than the reference SNR defined, since the received signal energy is split between the multipath components.

\section{Conclusions}
\par In this work, we have developed an analytical framework on the beam-alignment performance under both hierarchical search and exhaustive search for mmWave beam training, accounting for the training overhead in time. For a single-path channel model, we have derived bounds on the probability of misalignment and characterized the asymptotic behavior of these bounds using the method of large deviations. The exhaustive search is shown to be superior to the hierarchical search with sufficiently large training sequence length. The same performance trend still holds in the finite regime as shown by numerical results that have accounted for practically synthesized beams and multi-path fading channels, provided the pre-beamforming SNR is relatively low. This study indicates that while the hierarchical search can be an efficient design choice for users with high pre-beamforming SNR, the exhaustive search is a more effective method for a large portion of users that are not situated so close to mmWave base stations and tend to have relatively low SNR.

\appendices
\section{Proof of Lemma~\ref{Lemma_LDP}}\label{proof_LDP}
For notational convenience, we omit the superscript $^{(k)}$ for relevant variables if no confusion arises. Recall from~\eqref{Piece_wise_up} that the pairwise probability $F_{(2,2)}\left(1|\lambda_{l_{opt}},\lambda_{l}\right)$ can be represented as:
\begin{equation}
F_{(2,2)}\left(1|\lambda_{l_{opt}},\lambda_{l}\right) = Pr\left\{\frac{T_{l_{opt}}/N}{T_l/N}\leq 1\right\},
\end{equation}
where $T_{l_{opt}}\sim \chi^2_2(\lambda_{l_{opt}})$ and $T_l\sim \chi^2_2(\lambda_{l})$ are independent chi-square random variables with the same DoFs (DoFs=2) but different non-centrality parameters ($\lambda_{l_{opt}}$ and $\lambda_{l}$).

Denote $T'_{l_{opt}}\triangleq T_{l_{opt}}/N$ and $T'_l\triangleq T_l/N$. It is clear that $T'_{l_{opt}}$ and $T'_l$ are also independent. To prove \eqref{eq:lemma1_1}, we first show that $(T'_{l_{opt}},T'_l)$ satisfy LDP, using the well known Gartner-Ellis Theorem. To demonstrate this, we need to show that the limiting logarithmic moment generation function (MGF)
\begin{equation}\label{lim_MGF}
\Lambda(\mathbf{t}) = \lim_{N\uparrow \infty} \frac{1}{N}\Lambda_N(N\mathbf{t})
\end{equation}
exists as an extended real number~\cite{dembo2009large}, where $\Lambda_N(N\mathbf{t})\triangleq \log M_{(T'_{l_{opt}},T'_l)}(\mathbf{t})$ is the logarithmic MGF of $(T'_{l_{opt}},T'_l)$ and $\mathbf{t} = [t_1,t_2]$.

Since $T'_{l_{opt}}$ and $T'_l$ are independent, the MGF of the joint distribution is just the product of the MGF of each individual distribution. We therefore can represent $\Lambda_N(N\mathbf{t})$ as
\begin{align}\label{log_MGF}
\Lambda_N(N\mathbf{t}) &= \log M_{T'_{l_{opt}}}(Nt_1)+\log M_{T'_l}(Nt_2) \nonumber\\
& = \log \mathbb{E}_{T_{l_{opt}}}\left\{e^{t_1T_{l_{opt}}}\right\} + \log \mathbb{E}_{T_{l}}\left\{e^{t_2T_{l}}\right\},
\end{align}
where
\begin{equation} \label{MFG_U}
\mathbb{E}_{T_{l_{opt}}}\left\{e^{t_1T_{l_{opt}}}\right\}  =  \left\{
									\begin{array}{ll}
										\frac{e^{\frac{t_1\lambda_{l_{opt}} }{1-2t_1}}}{(1-2t_1)}, & t_1<\frac{1}{2} \\
										+\infty & \text{otherwise},
									   \end{array}
									   \right.
\end{equation}
\begin{equation} \label{MFG_V}
\mathbb{E}_{T_{l}}\left\{e^{t_2T_{l}}\right\}= \left\{
									\begin{array}{ll}
										\frac{e^{\frac{t_2\lambda_{l} }{1-2t_2}}}{(1-2t_2)}, & t_2<\frac{1}{2} \\
										+\infty & \text{otherwise}.
									   \end{array}
									   \right.
\end{equation}
Using \eqref{MFG_U}, \eqref{MFG_V}, \eqref{log_MGF} and \eqref{lim_MGF}, it can be shown that
\begin{align}
\Lambda(\mathbf{t}) &= \lim_{N\uparrow \infty} \frac{1}{N}\Lambda_{N}(N\mathbf{t}) \nonumber \\
& = \left\{
		\begin{array}{ll}
			\frac{t_1\xi_{l_{opt}}}{1-2t_1} + \frac{t_2\xi_{l}}{1-2t_2} & t_1,t_2<\frac{1}{2}\\
			+\infty. & \text{otherwise}
		\end{array}
		 \right.
\end{align}
Further, it can be shown that $\Lambda(\mathbf{t}) = 0$ only when $t_1<\frac{1}{2}$ and $t_2<\frac{1}{2}$. This verifies that $(T'_{l_{opt}},T'_l)$ satisfy the Gartner-Ellis conditions~\cite[Assumption 2.3.2, pp. 43]{dembo2009large}.

The rate function of $(T'_{l_{opt}},T'_l)$, i.e., $I(u,v)$ can then be obtained as
\begin{align}
I(u,v)& \triangleq \sup_{t_1,t_2 \in \mathbb{R}} \{t_1u+t_2v - \Lambda(\mathbf{t})\} \nonumber \\
& = \sup_{t_1,t_2<\frac{1}{2}} \left\{t_1u+t_2v - \frac{t_1\xi_{l_{opt}}}{1-2t_1} - \frac{t_2\xi_{l}}{1-2t_2}  \right\} \label{rate_function_optimization} \\
& = \frac{\left(\sqrt{\xi_{l_{opt}}}-\sqrt{u}\right)^2}{2}+\frac{\left(\sqrt{\xi_{l}}-\sqrt{v}\right)^2}{2}. \label{Rate_function}
\end{align}

With the rate function given in~\eqref{Rate_function}, the LDP tells us that
\begin{equation}
\lim_{N\uparrow \infty} \frac{1}{N}\log Pr\{(T'_{l_{opt}},T'_l) \in {\cal A}\} = -\inf_{(u,v) \in {\cal A}}I(u,v),
\end{equation}
if the set ${\cal A} \in \mathbb{R}^2$ is continuous.

For the pairwise probability $F_{(2,2)}\left(1|\lambda_{l_{opt}},\lambda_{l}\right)$, we have ${\cal A}=\{(u,v)|u \leq v\}$. This leads to the following rate function:
\begin{equation} \label{optimisation_rate}
I_1\left(\xi_{l_{opt}},\xi_{l}\right)= \inf_{u/v \leq 1}I(u,v).
\end{equation}

If the infimum of $I(u,v)$ in ${u/v \leq 1}$ is attained when $u/v<1$, then using the Karush-Kuhn-Tucker (KKT) conditions~\cite{boyd2004convex} of \eqref{optimisation_rate}, it can be shown that the infimum is attained at $u=\xi_{l_{opt}}$ and $v=\xi_l$. However, this contradicts with the pre-assumption that $u/v<1$ since  $\xi_{l_{opt}}>\xi_l$. It can therefore be concluded that the infimum of $I(u,v)$ in ${u/v \leq 1}$ is attained when $u=v$:
\begin{equation} \label{optimisation_rate_2}
I_1\left(\xi_{l_{opt}},\xi_{l}\right)= \inf_{u/v \leq 1}I(u,v)=\inf_{u/v = 1}I(u,v).
\end{equation}
It is then straightforward to show that the infimum is achieved when $\sqrt{u}=\sqrt{v} = \frac{\sqrt{\xi_{l_{opt}}}+\sqrt{\xi_{l}}}{2}$, and $I_1\left(\xi_{l_{opt}},\xi_{l}\right)=\frac{\left(\sqrt{\xi_{l_{opt}}}-\sqrt{\xi_{l}}\right)^2}{4}$. This proves \eqref{eq:lemma1_1}.

\par One can prove Eq.\eqref{eq:lemma1_2} in a similar manner and thus the details are omitted here.

\section{Proof of Proposition~\ref{Proposition_1}}\label{proof_P1}
\par To prove~\eqref{Eq_proposition_1}, we need to prove that $\lim_{N\rightarrow \infty}\frac{p^{(k)}_{up}}{p^{(k)}_{low}}=1$.
\par Recall that
\begin{equation}
p^{(k)}_{up} = \sum_{l=1,l\neq l^{(k)}_{opt}}^{L^{(k)}} F_{(2,2)}\left(1|\lambda_{l^{(k)}_{opt}}^{(k)},\lambda_{l}^{(k)}\right).
\end{equation}
Let $l^{(k)}_0 = \arg\max_{l=1,\ldots,L^{(k)},l\neq l^{(k)}_{opt}}\xi^{(k)}_l$. As $\xi^{(k)}_l<\xi^{(k)}_{l^{(k)}_{opt}}$, $\forall~l\neq l^{(k)}_{opt}$, it can be seen from \eqref{rate_1} that
\begin{equation}\label{Eq:dif_decay}
I_1\left(\xi_{l^{(k)}_{opt}}^{(k)},\xi_{l_0}^{(k)}\right)<I_1\left(\xi_{l^{(k)}_{opt}}^{(k)},\xi_{l}^{(k)}\right),~\forall l\neq l^{(k)}_0, l\neq l^{(k)}_{opt}.
\end{equation}
This implies that the $(L^{(k)}-1)$ terms in $p^{(k)}_{up}$ decay exponentially at different rates, with the term indexed by $l^{(k)}_0$ decaying at the lowest rate. Mathematically, it means:
\begin{align}
&\lim_{N\uparrow \infty} \frac{1}{N}\log\frac{F_{(2,2)}\left(1|\lambda_{l^{(k)}_{opt}}^{(k)},\lambda_{l}^{(k)}\right)}{F_{(2,2)}\left(1|\lambda_{l^{(k)}_{opt}}^{(k)},\lambda_{l_0}^{(k)}\right)} \nonumber \\ &~~~~~~~=I_1\left(\xi_{l^{(k)}_{opt}}^{(k)},\xi_{l_0}^{(k)}\right)-I_1\left(\xi_{l^{(k)}_{opt}}^{(k)},\xi_{l}^{(k)}\right)>0,
\end{align}
or equivalently,
\begin{equation}\label{asymp_dominant}
\lim_{N\uparrow \infty} \frac{F_{(2,2)}\left(1|\lambda_{l^{(k)}_{opt}}^{(k)},\lambda_{l}^{(k)}\right)}{F_{(2,2)}\left(1|\lambda_{l^{(k)}_{opt}}^{(k)},\lambda_{l_0}^{(k)}\right)} = 0.
\end{equation}
 Eq.~\eqref{asymp_dominant} demonstrates that $p^{(k)}_{up}$ is dominated by the term, $F_{(2,2)}\left(1|\lambda_{l^{(k)}_{opt}}^{(k)},\lambda_{l_0}^{(k)}\right)$, when $N\rightarrow \infty$:
\begin{equation}
\lim_{N\uparrow \infty} \frac{p^{(k)}_{up}}{F_{(2,2)}\left(1|\lambda_{l^{(k)}_{opt}}^{(k)},\lambda_{l_0}^{(k)}\right)} = 1.
\end{equation}
Further, since
\begin{align}
I_2\left(\xi_{l^{(k)}_{opt}}^{(k)},\xi_{i}^{(k)},\xi_{j}^{(k)}\right) &= \frac{\left(\sqrt{2\xi^{(k)}_{l^{(k)}_{opt}}}-\sqrt{\xi^{(k)}_{i}+\xi^{(k)}_{j}}\right)^2}{6}\nonumber \\
& = \frac{\left(\sqrt{\xi^{(k)}_{l^{(k)}_{opt}}}-\sqrt{\xi^{(k)}_{i,j}}\right)^2}{3},
\end{align}
where $\xi^{(k)}_{i,j}=\frac{\xi^{(k)}_{j}+\xi^{(k)}_{j}}{2}<\max\left\{\xi^{(k)}_{j},\xi^{(k)}_{j}\right\}\leq \xi_{l_0}^{(k)}<\xi^{(k)}_{l^{(k)}_{opt}}$, it can be seen that $I_2\left(\xi_{l^{(k)}_{opt}}^{(k)},\xi_{i}^{(k)},\xi_{j}^{(k)}\right)>I_1\left(\xi_{l^{(k)}_{opt}}^{(k)},\xi_{l_0}^{(k)}\right)$. This implies that
\begin{equation}
\lim_{N\uparrow \infty}\frac{F_{(2,4)}\left(1|\lambda_{l^{(k)}_{opt}}^{(k)},\lambda_{i}^{(k)}+\lambda_{j}^{(k)}\right)}{F_{(2,2)}\left(1|\lambda_{l^{(k)}_{opt}}^{(k)},\lambda_{l_0}^{(k)}\right)}=0,
\end{equation}
and hence,
\begin{align}
\lim_{N\uparrow \infty} \frac{p^{(k)}_{low}}{F_{(2,2)}\left(1|\lambda_{l^{(k)}_{opt}}^{(k)},\lambda_{l_0}^{(k)}\right)}= \lim_{N\uparrow \infty} \frac{p^{(k)}_{up}}{F_{(2,2)}\left(1|\lambda_{l^{(k)}_{opt}}^{(k)},\lambda_{l_0}^{(k)}\right)} = 1.
\end{align}
It is then immediate to see that $\lim_{N\rightarrow \infty}\frac{p^{(k)}_{up}}{p^{(k)}_{low}}=1$, and thus~\eqref{Eq_proposition_1} follows. This completes the proof.

\section{Proof of Proposition~\ref{Proposition_2}}\label{proof_proposition_2}
Let $k^*=\arg\min_k I_1\left(\xi_{l^{(k)}_{opt}}^{(k)},\xi_{l_0}^{(k)}\right)$. For any $j \neq k^*$, using \eqref{Eq_proposition_1} in Proposition~\ref{Proposition_1}, it can be seen that
\begin{align}
&\lim_{N\uparrow \infty}\frac{1}{N}\log\frac{p^{(j)}_{miss}\prod_{m=0}^{j-1}\left[1-p^{(m)}_{miss})\right]}{p^{(k^*)}_{miss}}\nonumber \\
&= I_1\left(\xi_{l^{(k^*)}_{opt}}^{(k^*)},\xi_{l^{(k^*)}_0}^{(k^*)}\right)-I_1\left(\xi_{l^{(j)}_{opt}}^{(j)},\xi_{l^{(j)}_0}^{(j)}\right)<0,
\end{align}
which hence implies that
\begin{align}
\lim_{N\uparrow \infty}\frac{p^{(j)}_{miss}\prod_{m=0}^{j-1}\left[1-p^{(m)}_{miss})\right]}{p^{(k^*)}_{miss}}=0.
\end{align}
Recalling~\eqref{Overall_upperbound}, it is then clear that
\begin{align}
\lim_{N\uparrow \infty}\frac{p_{miss}(K)}{p^{(k^*)}_{miss}} = 1
\end{align}
and hence
\begin{align}
\lim_{N\uparrow \infty} \frac{1}{N}\log{p_{miss}(K)} &= \lim_{N\uparrow \infty} \frac{1}{N}\log{p^{(k^*)}_{miss}}\nonumber \\
&= -I_1\left(\xi_{l^{(k^*)}_{opt}}^{(k^*)},\xi_{l^{(k^*)}_0}^{(k^*)}\right),
\end{align}
which completes the proof.

\bibliographystyle{IEEETran}

\begin{thebibliography}{10}
\providecommand{\url}[1]{#1}
\csname url@samestyle\endcsname
\providecommand{\newblock}{\relax}
\providecommand{\bibinfo}[2]{#2}
\providecommand{\BIBentrySTDinterwordspacing}{\spaceskip=0pt\relax}
\providecommand{\BIBentryALTinterwordstretchfactor}{4}
\providecommand{\BIBentryALTinterwordspacing}{\spaceskip=\fontdimen2\font plus
\BIBentryALTinterwordstretchfactor\fontdimen3\font minus
  \fontdimen4\font\relax}
\providecommand{\BIBforeignlanguage}[2]{{%
\expandafter\ifx\csname l@#1\endcsname\relax
\typeout{** WARNING: IEEEtran.bst: No hyphenation pattern has been}%
\typeout{** loaded for the language `#1'. Using the pattern for}%
\typeout{** the default language instead.}%
\else
\language=\csname l@#1\endcsname
\fi
#2}}
\providecommand{\BIBdecl}{\relax}
\BIBdecl

\bibitem{boccardi2014five}
F.~Boccardi, R.~W. Heath, A.~Lozano, T.~L. Marzetta, and P.~Popovski, ``Five
  disruptive technology directions for 5{G},'' \emph{IEEE Comms. Magazine},
  vol.~52, no.~2, pp. 74--80, 2014.

\bibitem{andrews2014will}
J.~G. Andrews, S.~Buzzi, W.~Choi, S.~V. Hanly, A.~Lozano, A.~C. Soong, and
  J.~C. Zhang, ``What will 5{G} be?'' \emph{IEEE J. Sel. Areas Comms.},
  vol.~32, no.~6, pp. 1065--1082, 2014.

\bibitem{7010531}
P.~Wang, Y.~Li, L.~Song, and B.~Vucetic, ``Multi-gigabit millimeter wave
  wireless communications for 5{G}: from fixed access to cellular networks,''
  \emph{IEEE Comms. Magazine}, vol.~53, no.~1, pp. 168--178, January 2015.

\bibitem{rappaport2013millimeter}
T.~S. Rappaport, S.~Sun, R.~Mayzus, H.~Zhao, Y.~Azar, K.~Wang, G.~N. Wong,
  J.~K. Schulz, M.~Samimi, and F.~Gutierrez, ``Millimeter wave mobile
  communications for 5{G} cellular: It will work!'' \emph{IEEE Access}, vol.~1,
  pp. 335--349, 2013.

\bibitem{6834753}
M.~Akdeniz, Y.~Liu, M.~Samimi, S.~Sun, S.~Rangan, T.~Rappaport, and E.~Erkip,
  ``Millimeter wave channel modeling and cellular capacity evaluation,''
  \emph{IEEE J. Sel. Areas Comms.}, vol.~32, no.~6, pp. 1164--1179, June 2014.

\bibitem{roh2014millimeter}
W.~Roh, J.-Y. Seol, J.~Park, B.~Lee, J.~Lee, Y.~Kim, J.~Cho, K.~Cheun, and
  F.~Aryanfar, ``Millimeter-wave beamforming as an enabling technology for 5{G}
  cellular communications: theoretical feasibility and prototype results,''
  \emph{IEEE Comms. Magazine}, vol.~52, no.~2, pp. 106--113, 2014.

\bibitem{alkhateeb2014mimo}
A.~Alkhateeb, J.~Mo, N.~Gonz{\'a}lez-Prelcic, and R.~W. Heath, ``{MIMO}
  precoding and combining solutions for millimeter-wave systems,'' \emph{{IEEE}
  Commun. Mag.}, vol.~52, no.~12, pp. 122--131, 2014.

\bibitem{heath2016overview}
R.~W. Heath, N.~Gonz{\'a}lez-Prelcic, S.~Rangan, W.~Roh, and A.~M. Sayeed, ``An
  overview of signal processing techniques for millimeter wave {MIMO}
  systems,'' \emph{IEEE J. Sel. Sig. Processing}, vol.~10, no.~3, pp. 436--453,
  2016.

\bibitem{el2014spatially}
O.~El~Ayach, S.~Rajagopal, S.~Abu-Surra, Z.~Pi, and R.~W. Heath, ``Spatially
  sparse precoding in millimeter wave {MIMO} systems,'' \emph{{IEEE} Trans.
  Wireless Commun.}, vol.~13, no.~3, pp. 1499--1513, 2014.

\bibitem{sohrabi2016hybrid}
F.~Sohrabi and W.~Yu, ``Hybrid digital and analog beamforming design for
  large-scale antenna arrays,'' \emph{{IEEE} J. Sel. Areas Commun.}, vol.~10,
  no.~3, pp. 501--513, 2016.

\bibitem{cordeiro2010ieee}
C.~Cordeiro, D.~Akhmetov, and M.~Park, ``{IEEE} 802.11ad: introduction and
  performance evaluation of the first multi-{G}bps {WiFi} technology,'' in
  \emph{Proc. {ACM} Int. workshop on mm{W}ave communications: from circuits to
  networks}, 2010, pp. 3--8.

\bibitem{tsang2011coding}
Y.~M. Tsang, A.~S. Poon, and S.~Addepalli, ``Coding the beams: Improving
  beamforming training in mmwave communication system,'' in \emph{IEEE
  GLOBECOM}, 2011, pp. 1--6.

\bibitem{wang2009beam}
J.~Wang, Z.~Lan, C.-W. Pyo, T.~Baykas, C.-S. Sum, M.~A. Rahman, J.~Gao,
  R.~Funada, F.~Kojima, H.~Harada \emph{et~al.}, ``Beam codebook based
  beamforming protocol for multi-{G}bps millimeter-wave {WPAN} systems,''
  \emph{{IEEE} J. Sel. Areas Commun.}, vol.~27, no.~8, pp. 1390--1399, 2009.

\bibitem{hur2013millimeter}
S.~Hur, T.~Kim, D.~J. Love, J.~V. Krogmeier, T.~A. Thomas, and A.~Ghosh,
  ``Millimeter wave beamforming for wireless backhaul and access in small cell
  networks,'' \emph{{IEEE} Trans. Commun.}, vol.~61, no.~10, pp. 4391--4403,
  2013.

\bibitem{alkhateeb2014channel}
A.~Alkhateeb, O.~El~Ayach, G.~Leus, and R.~W. Heath, ``Channel estimation and
  hybrid precoding for millimeter wave cellular systems,'' \emph{IEEE J. Sel.
  Sig. Processing}, vol.~8, no.~5, pp. 831--846, 2014.

\bibitem{Xiao2016}
Z.~Xiao, T.~He, P.~Xia, and X.~G. Xia, ``Hierarchical codebook design for
  beamforming training in millimeter-wave communication,'' \emph{{IEEE} Trans.
  Wireless Commun.}, vol.~15, no.~5, pp. 3380--3392, 2016.

\bibitem{Gao2016}
X.~Gao, L.~Dai, C.~Yuen, and Z.~Wang, ``Turbo-like beamforming based on {T}abu
  search algorithm for millimeter-wave massive {MIMO} systems,'' \emph{IEEE
  Trans. Veh. Technol.}, vol.~65, no.~7, pp. 5731--5737, 2016.

\bibitem{Kokshoorn2016Globecom}
M.~Kokshoorn, H.~Chen, Y.~Li, and B.~Vucetic, ``Race: A rate adaptive channel
  estimation approach for millimeter wave {MIMO} systems,'' in \emph{IEEE
  GLOBECOM}, 2016, pp. 1--6.

\bibitem{Kokshoorn2016}
M.~Kokshoorn, H.~Chen, P.~Wang, Y.~Li, and B.~Vucetic, ``Millimeter wave {MIMO}
  channel estimation using overlapped beam patterns and rate adaptation,''
  \emph{IEEE Trans. Sig. Processing}, vol.~65, no.~3, pp. 601--616, 2017.

\bibitem{huang2011millimeter}
K.-C. Huang and Z.~Wang, \emph{Millimeter wave communication systems}.\hskip
  1em plus 0.5em minus 0.4em\relax John Wiley \& Sons, 2011, vol.~29.

\bibitem{yong201160ghz}
S.-K. Yong, P.~Xia, and A.~Valdes-Garcia, \emph{60GHz Technology for Gbps WLAN
  and WPAN: from Theory to Practice}.\hskip 1em plus 0.5em minus 0.4em\relax
  John Wiley \& Sons, 2011.

\bibitem{dembo2009large}
A.~Dembo and O.~Zeitouni, \emph{Large deviations techniques and
  applications}.\hskip 1em plus 0.5em minus 0.4em\relax Springer Science \&
  Business Media, 2009, vol.~38.

\bibitem{barati2014dreictional}
C.~Barati~Nt., S.~Hosseini, S.~Rangan, P.~Liu, T.~Korakis, S.~Panwar, and
  T.~Rappaport, ``Directional cell discovery in millimeter wave cellular
  networks,'' \emph{IEEE Trans. Wireless Comms.}, vol.~14, no.~12, pp.
  6664--6678, 2015.

\bibitem{shokri2015millimeter}
H.~Shokri-Ghadikolaei, C.~Fischione, G.~Fodor, P.~Popovski, and M.~Zorzi,
  ``Millimeter wave cellular networks: A {MAC} layer perspective,'' \emph{IEEE
  Trans. Comms}, vol.~63, no.~10, pp. 3437--3458, 2015.

\bibitem{LiummWave2016}
C.~Liu, M.~Li, I.~B. Collings, S.~V. Hanly, and P.~Whiting, ``Design and
  analysis of transmit beamforming for millimeter wave base station
  discovery,'' \emph{{IEEE} Trans. Wireless Commun.}, vol.~16, no.~2, pp.
  797--811, 2017.

\bibitem{rappaport2015wideband}
T.~S. Rappaport, G.~R. MacCartney, M.~K. Samimi, and S.~Sun, ``Wideband
  millimeter-wave propagation measurements and channel models for future
  wireless communication system design,'' \emph{IEEE Trans. Comms}, vol.~63,
  no.~9, pp. 3029--3056, 2015.

\bibitem{galambos1977bonferroni}
J.~Galambos, ``Bonferroni inequalities,'' \emph{The Annals of Probability}, pp.
  577--581, 1977.

\bibitem{muhi2010modelling}
Z.~Muhi-Eldeen, L.~P. Ivrissimtzis, and M.~Al-Nuaimi, ``Modelling and
  measurements of millimetre wavelength propagation in urban environments,''
  \emph{IET microwaves, antennas and propagation}, vol.~4, no.~9, pp.
  1300--1309, 2010.

\bibitem{samimi201528}
M.~K. Samimi and T.~S. Rappaport, ``28 {GHz} millimeter-wave ultrawideband
  small-scale fading models in wireless channels,''
  \emph{http://arxiv.org/abs/1511.06938}, 2015.

\bibitem{boyd2004convex}
S.~Boyd and L.~Vandenberghe, \emph{Convex optimization}.\hskip 1em plus 0.5em
  minus 0.4em\relax {C}ambridge {U}niversity {P}ress, 2004.

\end{thebibliography}


\end{document}